\crefname{section}{Section}{Sections}
\Crefname{section}{Section}{Sections}
\crefname{figure}{Figure}{Figures}
\Crefname{figure}{Figure}{Figures}
\crefname{corollary}{Corollary}{Corollaries}
\Crefname{corollary}{Corollary}{Corollaries}
\crefname{lemma}{Lemma}{Lemmas}
\Crefname{lemma}{Lemma}{Lemmas}
\newtheorem{lemma}{Lemma}
\newtheorem{theorem}{Theorem}
\newtheorem{corollary}{Corollary}
\newtheorem{definition}{Definition}
\newcommand{\myparagraph}[1]{\paragraph{#1}}
\def\Nat{\mathbb{N}}
\def\to{\rightarrow}
\def\transfer{\textit{transfer}}
\def\read{\textit{read}}
\def\true{\textit{true}}
\def\false{\textit{false}}
\def\lf{\tiny}
\def\nnll{\refstepcounter{linenumber}{\lf\thelinenumber}}
\newcounter{linenumber}
\newcommand{\A}{\mathcal A}
\newcommand{\V}{\mathcal V}
\newcommand{\kcons}{$k$-consensus}
\newcommand{\ignore}[1]{}
\newcommand{\needsrev}[1]{{\color{red}#1}}
\newcommand{\cmt}[2]{\needsrev{[#1]\footnote{\needsrev{#2}}}}
\newcommand{\operation}[2]{$\textit{#1}#2$}
\def\BibTeX{{\rm B\kern-.05em{\sc i\kern-.025em b}\kern-.08emT\kern-.1667em\lower.7ex\hbox{E}\kern-.125emX}}
\begin{document}

\title{The Consensus Number of a Cryptocurrency}
\titlenote{This is an extended version of a conference article, comprising an additional section (\Cref{sec:generalize-shared}). The conference version of this article appears in the proceedings of the 2019 \emph{ACM Symposium on Principles of Distributed Computing} (PODC’19), July 29–August 2, 2019, Toronto, ON, Canada, \url{https://doi.org/10.1145/3293611.3331589}.}
\subtitle{(Extended Version)}

\author{Rachid Guerraoui}
\email{rachid.guerraoui@epfl.ch}
\affiliation{%
  \institution{EPFL}
  \city{Lausanne}
  \country{Switzerland}
}

\author{Petr Kuznetsov}
\email{petr.kuznetsov@telecom-paristech.fr}
\affiliation{%
  \institution{LTCI, Télécom Paris, IP Paris}
 \city{Paris}
 \country{France}
}

\author{Matteo Monti}
\email{matteo.monti@epfl.ch}
\affiliation{%
  \institution{EPFL}
  \city{Lausanne}
  \country{Switzerland}
}

\author{Matej Pavlovič}
\email{matej.pavlovic@epfl.ch}
\affiliation{%
  \institution{EPFL}
  \city{Lausanne}
 \country{Switzerland}
}

\author{Dragos-Adrian Seredinschi}
\authornote{This work has been supported in part by the European ERC Grant 339539 - AOC.}
\email{dragos-adrian.seredinschi@epfl.ch}
\affiliation{%
  \institution{EPFL}
  \city{Lausanne}
  \country{Switzerland}
}

\begin{abstract}



  Many blockchain-based algorithms, such as Bitcoin, implement a decentralized \emph{asset transfer system}, often referred to as a \emph{cryptocurrency}.
  As stated in the original paper by Nakamoto, at the heart of these systems lies the problem of preventing
  \emph{double-spending}; this is usually solved by achieving \emph{consensus} on the order of transfers among the participants.  
  In this paper, we treat the asset transfer problem as a
  \emph{concurrent object} and determine its \emph{consensus
    number}, showing that consensus is, in fact, not necessary to prevent double-spending.
  
 We first consider the problem as defined by Nakamoto, where only a single process---the account owner---can withdraw from each account.
Safety and liveness need to be ensured for correct account owners, whereas misbehaving account owners might be unable to perform transfers.
We show that the consensus number of an asset transfer object is $1$.
We then consider  a more general \emph{$k$-shared} asset transfer object where up to $k$ processes can atomically withdraw from the same account, and
show that this object has consensus number $k$.

We establish our results in the context of shared memory with benign faults, allowing us to properly understand the level of difficulty of the asset transfer problem.
We also translate these results in the message passing setting with Byzantine players, a model that is more relevant in practice.
In this model, we describe an asynchronous Byzantine fault-tolerant asset transfer implementation that is both simpler and more efficient than state-of-the-art consensus-based solutions.
Our results are applicable to both the permissioned (private) and permissionless (public) setting, as normally their differentiation is hidden by the abstractions on top of which our algorithms are based.
\end{abstract}

%
%

\begin{CCSXML}
<ccs2012>
<concept>
<concept_id>10003752.10003809.10010172</concept_id>
<concept_desc>Theory of computation~Distributed algorithms</concept_desc>
<concept_significance>500</concept_significance>
</concept>
</ccs2012>
\end{CCSXML}

\ccsdesc[500]{Theory of computation~Distributed algorithms}

%
\keywords{distributed computing, distributed asset transfer, blockchain, consensus}

\maketitle

\section{Introduction}
\label{sec:intro}

\sloppy
The Bitcoin protocol, introduced in 2008 by Satoshi Nakamoto,  implements a \emph{cryptocurrency}: an electronic decentralized asset transfer system~\cite{nakamotobitcoin}.
Since then, many alternatives to Bitcoin came to prominence.
These include major cryptocurrencies such as Ethereum~\cite{ethereum} or Ripple~\cite{rapoport2014ripple}, as well as systems sparked from research or industry efforts such as Bitcoin-NG~\cite{eyal16bitcoinng}, Algorand~\cite{gilad2017algorand}, ByzCoin~\cite{kogi16byzcoin}, Stellar~\cite{mazieres2015stellar}, Hyperledger~\cite{hyperledger}, Corda~\cite{he16corda}, or Solida~\cite{abr16solida}.
Each alternative brings novel approaches to implementing decentralized transfers, and sometimes offers a more general interface (known as smart contracts~\cite{sz97smartcontr}) than the original protocol proposed by Nakamoto.
They improve over Bitcoin in various aspects, such as performance, energy-efficiency, or security.

A common theme in these protocols, whether they are for basic transfers~\cite{kok18omniledger} or smart contracts~\cite{ethereum},
is that they seek to implement a \emph{blockchain}---a distributed ledger where all the transfers in the system are totally ordered.
Achieving total order among multiple inputs (e.g., transfers) is fundamentally a hard task, equivalent to solving \emph{consensus}~\cite{Her91,HT93}.
Consensus~\cite{FLP85}, a central problem in distributed computing, is known for its notorious difficulty.
It has no deterministic solution in asynchronous systems
if just a single participant can fail~\cite{FLP85}.
Partially synchronous consensus algorithms are tricky to implement correctly~\cite{abrah17revisiting,cac17blwild,clement09making}
and face tough trade-offs between performance, security, and energy-efficiency~\cite{antoni18smr,ber89optimal,gue18blockchain,vuko15quest}.
Not surprisingly, the consensus module is a major bottleneck in blockchain-based protocols~\cite{he16corda,sou18byzantine,vuko15quest}.

A close look at Nakamoto's original paper reveals that the central issue in implementing 
a decentralized asset transfer system (i.e., a cryptocurrency)
is preventing \emph{double-spending}, i.e., spending the same money more than once~\cite{nakamotobitcoin}.
Bitcoin and numerous follow-up systems typically assume that total order---and thus consensus---is vital to preventing double-spending~\cite{gar15backbone}.
There seems to be a common belief, indeed, that a consensus algorithm is essential for implementing decentralized asset transfers~\cite{BonneauMCNKF15,gue18blockchain,kar18vegvisir,nakamotobitcoin}.

As our main result in this paper, we show that this belief is false.
We do so by casting the asset transfer problem as a \emph{sequential object type} and determining that it has \emph{consensus number} $1$ in Herlihy's hierarchy \cite{Her91}.%
\footnote{The consensus number of an object type is the maximal number of processes that can solve consensus using only read-write shared memory and arbitrarily many objects of this type.}

The intuition behind this result is the following.
An asset transfer object maintains a set of accounts.
Each account is associated with an \emph{owner} process that is the only one allowed to issue transfers withdrawing from this account.
Every process can however read the balance of any account.

The main insight here is that relating accounts to unique owners obviates the need for consensus.
It is the owner that decides on the order of transfers from its own account, without the need to agree with any other process---thus the consensus number 1.
Other processes only validate the owner's decisions, ensuring that causal relations across accounts are respected.
We describe a simple asset transfer implementation using  atomic-snapshot memory~\cite{AADGMS93}.
%
A withdrawal from an account is validated by relating the withdrawn amount with the incoming transfers found in the memory snapshot.
Intuitively, as at most one withdrawal can be active on a given account at a time, it is safe to declare the validated operation as successful and post it in the snapshot memory.                

We also present a natural generalization of our result to the setting in which multiple processes are allowed to withdraw from the same account.
A \emph{$k$-shared} asset-transfer object allows up to $k$ processes to execute outgoing transfers from the same account.
We prove that such an object has consensus number $k$ and thus allows for implementing \emph{state machine replication} (now often referred to as \emph{smart contracts}) among the $k$ involved processes using $k$-consensus objects~\cite{JT92}.
%
\ignore{
We prove the lower bound by providing a wait-free implementation of consensus in a shared memory system with $k$ processes equipped with a $k$-shared asset transfer object.
Each process first announces its proposed value in a dedicated (per-process) register,
and then tries to perform a transfer from a shared account.
We choose the initial account balance and the withdrawn amounts so that (1)~exactly one withdrawal wins and (2)~the remaining balance identifies the winner process.
The upper bound proof makes use of a $k$-consensus object~\cite{JT92} (known to have consensus number $k$).
}
We show that $k$-shared asset transfer has consensus number
$k$ by reducing it to $k$-consensus (known to have consensus number $k$) and reducing $k$-consensus to asset transfer.
%

Having established the relative ease of the asset transfer problem using the shared memory model,
we also present a practical solution to this problem in the 
setting of Byzantine fault-prone processes communicating via message passing.
This setting matches realistic deployments of distributed systems.
We describe an asset transfer implementation that does not resort to consensus.
Instead, the implementation relies on a \emph{secure broadcast} primitive that ensures uniform reliable delivery with only weak ordering guarantees~\cite{ma97secure,MR97srm}, circumventing hurdles imposed by consensus.
%
In the $k$-shared case, our results imply that to execute some form of \emph{smart contract} involving $k$ users,
consensus is only needed among these $k$ nodes and not among all nodes in the system.
In particular, should these $k$ nodes be faulty, the rest of the accounts will not be affected.

To summarize, we argue that treating
the asset transfer problem as a concurrent data structure
and measuring its hardness through the lense of distributed computing help understand
it and devise better solutions to it.



The rest of this paper is organized as follows.
We first give the formal definition of the shared memory model and the asset transfer object type~(\cref{sec:model}).
Then, we show that this object type has consensus number $1$~(\cref{sec:avoiding-consensus}).
Next, we generalize our result by proving that a $k$-shared asset transfer object has consensus number $k$ (\cref{sec:k-consensus}).
Finally, we describe the implications of our results in the message passing model with Byzantine faults (\cref{sec:msg-passing-implications,sec:generalize-shared}) and discuss related work (\cref{sec:related-work}).


\section{Shared Memory Model and Asset-Transfer Object Type}
\label{sec:model}

We now present the shared memory model~(\cref{sec:def-shm-model}) and precisely define the problem of \textsf{asset-transfer} as a sequential object type~(\cref{sec:def-sequential}).




\subsection{Definitions}
\label{sec:def-shm-model}

\myparagraph{Processes.}
We assume a set $\Pi$ of $N$ asynchronous processes
that communicate by invoking atomic operations on shared memory objects.
Processes are sequential---we assume that a process never invokes a new operation before obtaining a response from a previous one.

\myparagraph{Object types.}
A sequential object type is defined as a tuple
$T=(Q,q_0,O,R,\Delta)$, where $Q$ is a set of states, $q_0\in Q$ is an
initial state, $O$ is a set of operations, $R$ is a set of responses and
$\Delta\subseteq Q\times\Pi\times O \times Q \times R$ is a relation
that associates a state, a process identifier and an operation to a set of
possible new states and corresponding responses.
We assume that $\Delta$ is total on the first three elements.

A \emph{history} is a sequence of invocations and responses, each
invocation or response associated with a process identifier.
A
\emph{sequential history} is a history that starts with an invocation and in
which every invocation is immediately followed with a response
associated with the same process.
A sequential history
is \emph{legal} if its invocations and responses respect the relation
$\Delta$ for some sequence of state assignments.

\myparagraph{Implementations.}
An \emph{implementation} of an object type $T$ is a distributed algorithm that,
for each process and invoked operation, prescribes the actions that the process needs to
take to perform it.
An \emph{execution} of an implementation is a sequence of
\emph{events}: invocations and responses of operations
or atomic accesses to shared abstractions. The sequence of events at every process
must respect the algorithm assigned to it.

\myparagraph{Failures.}
Processes are subject to \emph{crash} failures
(we consider more general Byzantine failures in the next section).
A process may halt prematurely, in which case we say that the process is \emph{crashed}.
A process is called \emph{faulty} if it crashes during the execution.
A process is \emph{correct} if it is not faulty.
All algorithms we present in the shared memory model are \emph{wait-free}---every correct process eventually returns from each operation it invokes,
regardless of an arbitrary number of other processes crashing or concurrently invoking operations.

\myparagraph{Linearizability.}
For each pattern of operation invocations, the execution produces a
\emph{history}, i.e., a sequence of distinct invocations and responses,
labelled with process identifiers and unique sequence numbers.

A projection of a history $H$ to process $p$, denoted $H|p$ is the
subsequence of elements of $H$ labelled with $p$.
An invocation $o$ by a process $p$ is \emph{incomplete} in $H$ if it is not followed by a
response in $H|p$.
A history is \emph{complete} if it has no incomplete invocations.
A \emph{completion} of $H$ is a history $\bar H$ that is identical to
$H$ except that every incomplete invocation in $H$ is either removed or
\emph{completed} by inserting a matching response somewhere after it.


An invocation $o_1$ \emph{precedes} an invocation $o_2$ in $H$, denoted \hbox{$o_1\prec_H o_2$},
if $o_1$ is complete and the corresponding response $r_1$ precedes $o_2$ in $H$.
Note that $\prec_H$ stipulates a partial order on invocations in $H$.
A \emph{linearizable} implementation  (also said an \emph{atomic
  object}) of  type $T$
ensures that for every history $H$ it produces,
there exists a completion $\bar H$ and a legal sequential history $S$ such that
(1)~for all processes $p$, $\bar H|p=S|p$ and
(2)~$\prec_H\subseteq\prec_S$.

\myparagraph{Consensus number.}
The problem of \emph{consensus} consists for a set of processes to
\emph{propose} values and \emph{decide} on the proposed values
so that no two processes decide on defferent values and every correct
process decides. The \emph{consensus number} of a
type $T$ is the maximal number of processes that can solve consensus
using atomic objects of type $T$ and read-write registers.



\subsection{The asset transfer object type}
\label{sec:def-sequential}

Let $\A$ be a set of \emph{accounts} and $\mu: \A \rightarrow
2^{\Pi}$ be an ``owner'' map that associates each account with a set
of processes that are, intuitively, allowed to debit the account.
We define the \textsf{asset-transfer} object type associated with $\A$ and $\mu$ as a tuple $(Q,q_0,O,R,\Delta)$, where:

\begin{compactitem}

\item The set of states $Q$ is the set of all possible maps
  $q:\;\A\to\Nat$. Intuitively, each state of the object assigns
  each account its \emph{balance}.

\item The initialization map  $q_0:\;\A\to\Nat$ assigns the initial
  balance to each account.

\item Operations and responses of  the type are defined as
  $O=\{\transfer(a,b,x):\; a,b\in\A,\,x\in\Nat\}\cup\{\textit{read}(a):\;a\in\A\}$
  and $R=\{\true,\false\}\cup\Nat$.

 \item $\Delta$ is the set of valid state transitions.
   For a state $q\in Q$, a process $p\in\Pi$, an operation $o\in O$, a response
   $r\in R$ and a new state $q'\in Q$, the tuple $(q,p,o,q',r)\in\Delta$
   if and only if one of the following conditions is satisfied:

  \begin{compactitem}

     \item $o=\transfer(a,b,x) \wedge p\in\mu(a)$ $\wedge$
       $q(a)\geq x$ $\wedge$ $q'(a)=q(a)-x$ $\wedge$ $q'(b)=q(b)+x \wedge \forall c \in{\A}\setminus\{a, b\}: q'(c) = q(c)$ (all other
       accounts unchanged) $\wedge$ $r=\true$;

     \item $o=\transfer(a,b,x)$ $\wedge$  ($p\notin \mu(a)$
       $\vee$ $q(a)< x$) $\wedge$ $q'=q$ $\wedge$ $r=\false$;

     \item $o=\textit{read}(a)$ $\wedge$ $q=q'$ $\wedge$ $r=q(a)$.

   \end{compactitem}

  \end{compactitem}

In other words, operation $\transfer(a,b,x)$ invoked by
process $p$ \emph{succeeds} if and only if $p$ is the owner of the
source account $a$ and
account $a$  has enough balance, and if it does, $x$ is transferred
from $a$ to the destination account $b$.
A $\transfer(a,b,x)$ operation is called \emph{outgoing} for
$a$ and \emph{incoming} for $b$; respectively, the $x$ units are called   \emph{outgoing} for
$a$ and \emph{incoming} for $b$.
A transfer is \emph{successful} if its corresponding response is \emph{true} and \emph{failed} if its corresponding response is \emph{false}.
Operation $\textit{read}(a)$ simply returns the balance of $a$ and
leaves the account balances untouched.

As in Nakamoto's original paper \cite{nakamotobitcoin}, we assume for
the moment that an \textsf{asset-transfer} object has at most one owner per account:
$\forall a \in \A : |\mu(a)| \leq 1$.
Later we lift this assumption and consider more general $k$-shared \textsf{asset-transfer} objects with arbitrary owner maps~$\mu$ (\cref{sec:k-consensus}).
For the sake of simplicity, we also restrict ourselves to transfers with a single source account and a single destination account.
However, our definition (and implementation) of the \textsf{asset-transfer} object type can trivially be extended to support transfers with multiple source accounts (all owned by the same sequential process) and multiple destination accounts.


\section{Asset Transfer Has Consensus Number 1}
\label{sec:avoiding-consensus}

In this section, 
we show that  the \textsf{asset-transfer} type can be wait-free implemented using only read-write registers in a shared memory system with crash failures.
Thus, the type has consensus number~$1$~\cite{Her91}.

Consider an \textsf{asset-transfer} object associated with a set of accounts $\A$ and
an ownership map $\mu$ where $\forall a\in\A$, $|\mu(a)|\leq 1$.
Our implementation is described in Figure~\ref{fig:waitfree}.
Every process $p$ is associated with a distinct location in an
atomic snapshot object~\cite{AADGMS93}
storing the set of all successful {\transfer} operations
executed by $p$ so far.
Since each account is owned by at most one process, all outgoing transfers for an account appear in a single location of the atomic snapshot (associated with the owner process).
This principle bears a similarity to the implementation of a counter object.

Recall that the atomic snapshot (AS) memory is represented as a vector of $N$
shared variables
that can be accessed with two atomic operations: \emph{update} and
\emph{snapshot}. An \emph{update} operation
modifies the value at a given position of the vector and a
\emph{snapshot} returns the state of the whole vector.
We implement the \textit{read} and \textit{transfer} operations as follows.

\begin{figure}[b]
\hrule 
 {\small
\begin{tabbing}
 bbb\=bb\=bb\=bb\=bb\=bb\=bb\=bb \=  \kill
Shared variables: \\
\> $AS$, atomic snapshot, initially $\{\bot\}^N$\\
\\
Local variables: \\
\> $\textit{ops}_p\subseteq \A\times\A\times\Nat$, initially $\emptyset$\\
\\
Upon \transfer$(a,b,x)$\\
\nnll\label{line1:tsf:snapshot}\> $S = AS.\textit{snapshot}()$\\
\nnll\> \textbf{if} $p \notin \mu(a) \vee \textit{balance}(a,S)< x$ \textbf{then}\\
\nnll\label{line1:response-false}\>\> \textbf{return} {\false}\\
\nnll\> $\textit{ops}_p = \textit{ops}_p \cup \{(a,b,x)\}$\\
\nnll\label{line1:tsf:update}\> $AS.\textit{update}(\textit{ops}_p)$\\
\nnll\label{line1:response-true}\>\textbf{return} {\true}\\
\\
Upon \textit{read}$(a)$\\
\nnll\label{line1:read:snapshot}\> $S = AS.\textit{snapshot}()$\\
\nnll\label{line1:read:response}\> \textbf{return} $\textit{balance}(a,S)$\\

\end{tabbing}
 }
 \hrule
\caption{Wait-free implementation of \textsf{asset-transfer}: code for process $p$}
\label{fig:waitfree}
\end{figure}

\begin{compactitem}
\item
To read the balance of an account $a$, the process simply takes a snapshot $S$ and
returns the initial balance plus the sum of incoming amounts minus the sum of all outgoing
amounts. We  denote this number by $\textit{balance}(a,S)$.
As we argue below, the result is guaranteed to be non-negative,
i.e., the operation is correct with respect to the type specification.
\item
To perform $\transfer(a,b,x)$, a process $p$, the owner of $a$, takes a snapshot $S$ and computes $\textit{balance}(a,S)$. 
If the amount to be transferred does not exceed $\textit{balance}(a,S)$,
we add the transfer operation to the set of $p$'s operations in the snapshot object via an \textit{update} operation and return $\true$.
Otherwise, the operation returns $\false$.
\end{compactitem}

\begin{theorem}
  \label{th:waitfree}
  The \textsf{asset-transfer} object type has a wait-free implementation in the
  read-write shared memory model.
\end{theorem}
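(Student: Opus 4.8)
The plan is to verify the two properties that make Figure~\ref{fig:waitfree} a correct wait-free implementation: termination and linearizability with respect to the \textsf{asset-transfer} specification $\Delta$. Wait-freedom is the easy half. The atomic snapshot object itself admits a wait-free implementation from read-write registers~\cite{AADGMS93}, and each \transfer or \read invokes only a bounded number of \textit{snapshot}/\textit{update} operations with no loops; hence every correct process returns after finitely many steps regardless of how others behave. So it remains to argue linearizability.

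For linearizability I would fix the following linearization points: a \read and a failed \transfer are linearized at their \textit{snapshot} (line~\ref{line1:read:snapshot}, resp.\ line~\ref{line1:tsf:snapshot}), while a successful \transfer is linearized at its \textit{update} (line~\ref{line1:tsf:update}). Since the underlying atomic snapshot is linearizable, these points induce a total order $S$ on the implemented operations. Respecting real time is then immediate: each operation's linearization point lies inside its own execution interval, so whenever $o_1 \prec_H o_2$ the corresponding snapshot/update event of $o_1$ precedes that of $o_2$ in the snapshot object's linearization, giving $o_1 \prec_S o_2$.

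The substance of the proof is showing that $S$ is \emph{legal}, and in particular that $\textit{balance}(a,S)$ never goes negative so that the responses of successful transfers are admissible under $\Delta$. The structural fact I would exploit is the single-owner assumption $|\mu(a)|\le 1$: every outgoing transfer from $a$ is produced by one sequential process $p$, which posts each successful one into its own snapshot cell. Thus the outgoing transfers from $a$ are totally ordered by $p$'s program order, and any snapshot reflecting $p$'s $j$-th such transfer reflects all earlier ones as well. Writing $I(a,S)$ and $U(a,S)$ for the incoming and outgoing sums for $a$ recorded in $S$, we have $\textit{balance}(a,S)=q_0(a)+I(a,S)-U(a,S)$. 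Let $t_m$ (of amount $x_m$) be the last outgoing transfer from $a$ appearing in $S$; it was validated against an earlier snapshot $S_m$ for which $U(a,S_m)=x_1+\dots+x_{m-1}$, so its check yields $q_0(a)+I(a,S_m)\ge x_1+\dots+x_m=U(a,S)$. Because $S$ reflects $t_m$ it was taken after $S_m$, so monotonicity of snapshot contents gives $I(a,S_m)\le I(a,S)$ and hence $q_0(a)+I(a,S)\ge U(a,S)$, i.e.\ $\textit{balance}(a,S)\ge 0$.

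With non-negativity established, legality follows by matching each linearization point to the type: because successful transfers linearize at their updates and reads and failed transfers at their snapshots, the set of successful transfers reflected in a snapshot $S$ is exactly the set of operations linearized before the operation that took $S$, so $\textit{balance}(a,S)$ coincides with the balance of $a$ in the sequential prefix, and every response (\true, \false, or a read value) is the one prescribed by $\Delta$. One detail I must still check is that a successful transfer is valid at its \textit{update} point and not merely at its earlier \textit{snapshot}: between the two no outgoing transfer from $a$ by anyone can be interleaved (again by single ownership and $p$'s sequentiality) while incoming amounts only grow, so $\textit{balance}(a,\cdot)$ at the update is at least its value at the snapshot. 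I expect this point---that a validation performed at the snapshot remains sound at the later linearization point, resting entirely on monotonicity and the single owner per account---to be the main obstacle, since it is exactly where agreement among processes, and hence consensus, would otherwise be required.
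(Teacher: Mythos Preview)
Your proposal is correct and follows essentially the same route as the paper: identical linearization points (snapshot for reads and failed transfers, update for successful transfers) and the same core monotonicity argument exploiting single ownership to show that a balance check validated at the snapshot remains valid at the later update. The only cosmetic difference is that the paper frames legality as an explicit induction on prefixes of the linearization while you argue the snapshot-balance invariant directly; you should, however, untangle your overloaded use of $S$ for both the linearization order and individual snapshot states before writing this up.
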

\begin{proof}
  Fix an execution $E$ of the algorithm in Figure~\ref{fig:waitfree}.
   Atomic snapshots can be wait-free implemented in the read-write
   shared memory model~\cite{AADGMS93}.
  As every operation only involves a finite number of atomic snapshot
  accesses, every process completes each of the operations it invokes
  in a finite number of its own steps.

\noindent Let $\textit{Ops}$ be the set of:
\begin{compactitem}
\item All invocations of \transfer\ or \emph{read} in $E$ that returned, and
\item All invocations of \transfer\ in $E$ that completed the
  \emph{update} operation (line~\ref{line1:tsf:update}). 
\end{compactitem}

Let $H$ be the history of $E$.
We define a completion of $H$ and, for each $o\in\textit{Ops}$, we define a linearization point as follows:

\begin{compactitem}

  \item If $o$ is a {\read} operation, it linearizes at the
    linearization point of the \emph{snapshot} operation in
    line~\ref{line1:read:snapshot}.

  \item   If $o$ is a {\transfer} operation that returns {\false},
    it linearizes at the linearization point of the \emph{snapshot} operation in
    line~\ref{line1:tsf:snapshot}.

 \item   If $o$ is a {\transfer} operation that completed the \emph{update} operation,  it linearizes at the
    linearization point of the \emph{update} operation in
    line~\ref{line1:tsf:update}.
    If $o$ is incomplete in $H$, we complete it with response $\true$.

\end{compactitem}

Let $\bar H$ be the resulting complete history and let $L$ be the sequence
of complete invocations of $\bar H$ in the order of their
linearization points in $E$.
Note that, by the way we linearize invocations, the linearization of a
prefix of $E$ is a prefix of $L$.

Now we show that $L$ is legal and, thus, $H$ is
linearizable.
We proceed by induction, starting with the empty (trivially legal)
prefix of $L$.
Let $L_{\ell}$ be the legal prefix of the first $\ell$ invocations and
$op$ be the $(\ell+1)$st operation of $L$.
Let $op$ be invoked by process $p$.
The following cases are possible:

\begin{compactitem}

\item $op$ is a {\read}$(a)$: the snapshot taken at the linearization point of $op$
  contains all successful transfers concerning $a$ in $L_{\ell}$. By
  the induction hypothesis, the resulting balance is non-negative.

\item $op$ is a failed {\transfer}$(a,b,x)$: the snapshot taken at the linearization point of $op$
  contains all successful transfers concerning $a$ in $L_{\ell}$. By
  the induction hypothesis, the resulting balance is non-negative.

\item  $op$ is a successful  {\transfer}$(a,b,x)$:  by the algorithm,
  before the linearization point of $op$, process $p$ took a snapshot.
  Let $L_{k}$, $k\leq\ell$, be the prefix of $L_{\ell}$ that only
  contain operations linearized before the point in time when the
  snapshot was taken by $p$.

  We observe that $L_k$ includes a \emph{subset} of all incoming transfers on $a$ and
  \emph{all} outgoing transfers on $a$ in $L_{\ell}$. Indeed, as $p$
  is the owner of $a$ and only the owner of $a$ can perform outgoing
  transfers on $a$, all outgoing transfers in $L_{\ell}$ were
  linearized before the moment $p$ took the snapshot within $op$.
  Thus, $\textit{balance}(a,L_k) \leq \textit{balance}(a,L_{\ell})$.%
  \footnote{
    Analogously to $\textit{balance}(a,S)$ that computes the balance for account $a$ based on the transfers contained in snapshot $S$,
    $\textit{balance}(a,L)$, if $L$ is a sequence of operations, computes the balance of account $a$ based on all transfers in $L$.
  }

  By the algorithm, as $op={\transfer}(a,b,x)$ succeeds, we have
  $\textit{balance}(a,L_k)\geq x$.
  Thus, $\textit{balance}(a,L_{\ell})\geq x$ and the resulting balance
  in $L_{\ell+1}$ is non-negative.

\end{compactitem}

Thus, $H$ is linearizable.
\end{proof}

\begin{corollary}\label{cor:consensus}
The \textsf{asset-transfer} object type has \hbox{consensus number~$1$}.
\end{corollary}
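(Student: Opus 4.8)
The plan is to obtain the corollary directly from Theorem~\ref{th:waitfree} together with the classical result of Herlihy~\cite{Her91} that read-write registers have consensus number exactly~$1$. First I would note that every object type has consensus number at least~$1$, since a single process trivially solves consensus by deciding its own input; so it remains only to show that the consensus number of \textsf{asset-transfer} does not exceed~$1$.

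For this upper bound I would reason by contradiction. Assume that \textsf{asset-transfer} has consensus number $k \geq 2$: there exists a wait-free algorithm solving consensus among $k$ processes using atomic \textsf{asset-transfer} objects and read-write registers. By Theorem~\ref{th:waitfree}, each atomic \textsf{asset-transfer} object used by this algorithm can be replaced by a wait-free, linearizable implementation built solely from read-write registers (the atomic snapshots it relies on are themselves register-implementable~\cite{AADGMS93}). Performing this substitution yields an algorithm that solves $k$-process consensus using only read-write registers, contradicting the fact that registers cannot solve consensus for two or more processes. Hence the consensus number of \textsf{asset-transfer} is at most~$1$, and therefore exactly~$1$.

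The step I expect to require the most care is justifying that the object replacement preserves correctness: I would appeal to the locality (composability) of linearizability to guarantee that substituting the atomic implementation of \textsf{asset-transfer} inside the hypothetical consensus algorithm keeps every history indistinguishable from one using genuine atomic objects, and to the composition of wait-free implementations to guarantee that termination is preserved. Beyond invoking these standard properties and Theorem~\ref{th:waitfree}, no further combinatorial argument is needed.
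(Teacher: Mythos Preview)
Your proposal is correct and matches the paper's approach: the corollary is stated without proof as an immediate consequence of Theorem~\ref{th:waitfree}, and your argument simply spells out the standard reduction (wait-free implementability from registers implies consensus number~$1$) that the paper leaves implicit.
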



\section{\texorpdfstring{$k$}{k}-Shared Asset Transfer Has Consensus Number \texorpdfstring{$k$}{k}}
\label{sec:k-consensus}

We now consider the case with an arbitrary owner map $\mu$.
We show that an \textsf{asset-transfer} object's consensus number is the maximal number of processes sharing an account.
More precisely, the consensus number of an \textsf{asset-transfer} object is $\max_{a\in\A}|\mu(a)|$. 

We say that an \textsf{asset-transfer} object, defined on a set of
accounts $\A$ with an ownership map $\mu$,
is \emph{k-shared} iff $\max_{a\in\A}|\mu(a)|=k$.
In other words, the object is $k$-shared if $\mu$ allows at least one account to be owned by
$k$ processes, and no account is owned by more than $k$ processes.

We show that the consensus number of any $k$-shared \textsf{asset-transfer} object is $k$, which
generalizes our result in \cref{cor:consensus}.
We first show that such an object has consensus number \emph{at least $k$} by implementing consensus for $k$ processes using only registers and an instance of $k$-shared \textsf{asset-transfer}.
We then show that $k$-shared \textsf{asset-transfer} has consensus number \emph{at most $k$} by reducing it to \kcons, an object known to have consensus number $k$~\cite{JT92}.


\begin{lemma}
  \label{lem:cons-to-at}
  Consensus has a wait-free implementation for $k$ processes in the read-write shared memory model equipped with a single $k$-shared \textsf{asset-transfer} object.
\end{lemma}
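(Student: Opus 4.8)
The plan is to implement wait-free $k$-process consensus using a single $k$-shared \textsf{asset-transfer} object together with read-write registers. I would fix a single account $a$ owned by all $k$ participating processes, so $\mu(a)$ is exactly the set of deciders $\{p_1,\dots,p_k\}$, and exploit the fact that any owner may attempt to debit $a$. The key idea is to set up balances so that exactly one withdrawal attempt succeeds and, crucially, so that every process can afterwards read the account state and unambiguously deduce \emph{which} process won. First, each process $p_i$ writes its proposed value into a dedicated single-writer register $\textit{val}[i]$ before touching the asset-transfer object; this is the standard ``announce your input'' step that lets the eventual winner's value be recovered by everyone.

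The heart of the construction is the arithmetic of balances. I would give account $a$ an initial balance chosen so that precisely one outgoing {\transfer} can succeed, and have process $p_i$ attempt $\transfer(a, b, x_i)$ for carefully chosen amounts $x_i$ (to some dummy sink account $b$). The amounts should be picked so that (1)~the balance suffices for exactly one transfer to go through, and (2)~the \emph{residual} balance of $a$ after the winning transfer encodes the index $i$ of the winner. A clean way to do this: initialize $a$ with balance, say, $k$, and let $p_i$ attempt to withdraw $k - i + 1$ units (or some analogous index-revealing quantity). Since the \textsf{asset-transfer} type is linearizable (it is an atomic object of this sequential type), the successful transfers are totally ordered; the first one to linearize finds enough balance and succeeds, while every later attempt finds the balance reduced below what it requests and receives {\false}. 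Each process then reads $\textit{balance}(a)$ (via the {\read} operation), inverts the encoding to recover the winner's index $j$, reads $\textit{val}[j]$, and decides that value.

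To argue correctness I would establish the three consensus properties. \emph{Validity} follows because the decided value is always some $\textit{val}[j]$, which was written by process $p_j$ before it engaged the object, hence is a genuinely proposed value. \emph{Agreement} follows from linearizability: the residual balance that any process observes after at least one transfer has succeeded is a fixed quantity determined by the unique winner, so every process inverting the encoding recovers the same index $j$ and the same value $\textit{val}[j]$. \emph{Wait-freedom / termination} follows because each process performs only a bounded number of operations on wait-free objects (registers and the atomic asset-transfer object), and no step blocks on another process.

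The main obstacle I anticipate is \textbf{synchronization between reading the announced value and reading the residual balance}. A process must not decide a winner's value before that winner has actually written its input register, otherwise it could read a stale or default value. Concretely, I must ensure that whenever the {\read} of $\textit{balance}(a)$ reveals that $p_j$ won, the write to $\textit{val}[j]$ is already visible. This is handled by program order: since every process writes $\textit{val}[i]$ \emph{before} invoking its transfer, and the winner's transfer linearizes before any deciding read observes the reduced balance, the winner's register write happens-before the reading process's read of that register. Making this happens-before argument airtight---ruling out the race where a process reads the balance, infers winner $j$, but $p_j$'s $\textit{val}[j]$ write is not yet reflected---is the delicate part, and I would discharge it by carefully ordering the steps (announce, then transfer; and on the reading side, read balance only after confirming a transfer has taken effect) and appealing to the atomicity of both the registers and the asset-transfer object.
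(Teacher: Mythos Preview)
Your approach is exactly the paper's: announce in a register, then attempt a single outgoing transfer from a jointly-owned account with amounts chosen so that (i) only one transfer can succeed and (ii) the residual balance identifies the winner, then read the winner's register. The happens-before concern you flag is handled precisely as you describe, and the paper does not dwell on it further.

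However, your concrete arithmetic does not satisfy property~(i). With initial balance $k$ and process $p_i$ withdrawing $k-i+1$, process $p_k$ withdraws $1$ unit, leaving balance $k-1$; then $p_{k-1}$, who wants to withdraw $2$, still succeeds whenever $k\ge 3$. In general two withdrawals sum to $2k-i-j+2$, which can be as small as $3$, far below the initial balance. The paper's choice avoids this: initial balance $2k$ and process $p$ withdraws $2k-p$, so any two distinct withdrawals sum to at least $(2k-k)+(2k-(k-1))=2k+1>2k$, guaranteeing at most one success, and the residual $2k-(2k-p)=p$ directly names the winner. Replace your numbers with these (or any scheme where the two smallest withdrawal amounts already exceed the initial balance) and your argument goes through verbatim.
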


\begin{figure}
\hrule
 {\small
\setcounter{linenumber}{0}
\begin{tabbing}
bbb\=bb\=bb\=bb\=bb\=bb\=bb\=bb \=  \kill

Shared variables: \\
\> $R[i], i \in 1,\ldots,k$, $k$ registers, initially $R[i] = \bot, \forall i$\\
\> $AT$, $k$-shared \textsf{asset-transfer} object containing:\\
\>\>\> -- an account $a$ with initial balance $2k$\\
\>\>\>\>owned by processes $1,\ldots,k$\\
\>\>\> -- some account $s$\\
\\
Upon \textit{propose}$(v)$:\\
\nnll\label{ln:simple-announce}\> $R[p].write(v)$\\
\nnll\label{ln:simple-transfer}\> $AT.transfer(a, s, 2k - p))$\\
\nnll\label{ln:simple-read}\> \textbf{return} $R[AT.read(a)].read()$

\end{tabbing}
 }
 \hrule
\caption{Wait-free implementation of consensus among $k$ processes using a $k$-shared \textsf{asset-transfer} object. Code for process $p\in\{1,\ldots,k\}$.}
\label{fig:kcons-to-kaccount}
\end{figure}

\begin{proof}

We now provide a wait-free algorithm that solves consensus among $k$ processes using only registers and an instance of $k$-shared \textsf{asset-transfer}.
The algorithm is described in~\cref{fig:kcons-to-kaccount}.
Intuitively, $k$ processes use one shared account $a$ to elect one of them whose input value will be decided.
Before a process $p$ accesses the shared account, $p$ announces its input in a register (line~\ref{ln:simple-announce}).
Process $p$ then tries to perform a transfer from account $a$ to another account.
The amount withdrawn this way from account $a$ is chosen specifically such that:
\begin{compactenum}
\item only one transfer operation can ever succeed, and
\item if the transfer succeeds, the remaining balance on $a$ will uniquely identify process $p$.
\end{compactenum}
To satisfy the above conditions, we initialize the balance of account
$a$ to $2k$ and have each process $p\in\{1,\ldots,k\}$ transfer $2k-p$ (line
\ref{ln:simple-transfer}).
Note that  transfer operations invoked by distinct processes
$p,q\in\{1,\ldots,k\}$ have arguments $2k-p$ and $2k-q$, and
$2k-p+2k-q\geq 2k-k+2k-(k-1)=2k+1$.
The initial balance of $a$ is only $2k$ and no incoming transfers are ever executed.
Therefore, the first transfer operation to be applied to the
object succeeds (no transfer tries to withdraw more then $2k$) and the remaining operations will have to fail due to insufficient balance.

\noindent When $p$ reaches line \ref{ln:simple-read}, at least one transfer must have succeeded:
\begin{compactenum}
\item either $p$'s transfer succeeded, or
\item $p$'s transfer failed due to insufficient balance, in which case
  some other process must have previously succeeded.
\end{compactenum}
Let $q$ be the process whose transfer succeeded. Thus, the balance of account $a$ is $2k - (2k - q) = q$.
Since $q$ performed a transfer operation, by the algorithm, $q$ must
have previously written its proposal to the register $R[q]$.
Regardless of whether $p = q$ or $p \neq q$, reading the balance of account $a$ returns $q$ and $p$ decides the value of $R[q]$.
\end{proof}


To prove that $k$-shared \textsf{asset-transfer} has consensus number at most $k$, we reduce $k$-shared \textsf{asset-transfer} to \kcons.
A \kcons\ object exports a single operation \emph{propose} that, the first $k$ times it is invoked, returns the argument of the first invocation.
All subsequent invocations return $\bot$.
Given that \kcons\ is known to have consensus number exactly
$k$~\cite{JT92}, a wait-free algorithm implementing $k$-shared \textsf{asset-transfer} using
only registers and \kcons\ objects implies that the consensus number of $k$-shared \textsf{asset-transfer} is not more than $k$.

\begin{figure}
\hrule 
 {\small
\setcounter{linenumber}{0}
\begin{tabbing}
bbb\=bb\=bb\=bb\=bb\=bb\=bb\=bb \=  \kill
Shared variables:\\
\> $AS$, atomic snapshot object\\
\> for each $a\in\A$:\\
\>\> $R_a[i], i \in \Pi$, registers, initially $[\bot, \ldots,\bot]$\\
\>\> $kC_a[i], i \geq 0$, list of instances of \kcons\ objects\\
\\
Local variables:\\
\>\> $\textit{hist}$: a set of completed trasfers, initially empty\\
\>for each $a\in \A$:\\
\>\> \textit{committed}$_a$, initially $\emptyset$\\
\>\> \textit{round}$_a$, initially $0$\\
\\
Upon \textit{transfer}$(a, b, x)$:\\
\nnll\> \textbf{if} $p \notin \mu(a)$ \textbf{then}\\
\nnll\>\> \textbf{return} {\false}\\
\nnll\> $tx = (a, b, x, p,\textit{round}_a)$\\
\nnll\label{alg:kwaitfree:register}\> $R_{a}[p].write(tx)$\\
\nnll\> $\textit{collected} = \textit{collect}(a) \setminus \textit{committed}_{a}$\\
\nnll\> \textbf{while} $tx \in \textit{collected}$ \textbf{do} \\
\nnll\label{alg:kwaitfree:picknext}\>\> $\textit{req} = $ the oldest transfer in $\textit{collected}$\\
\nnll\label{alg:kwaitfree:snapshot}\>\> $\textit{prop} = \textit{proposal}(\textit{req}, AS.snapshot())$\\
\nnll\label{alg:kwaitfree:decide}\>\> $\textit{decision} = kC_{a}[\textit{round}_a].propose(prop)$\\
\nnll\>\> $\textit{hist} = \textit{hist} \cup \{\textit{decision}\}$\\
\nnll\label{alg:kwaitfree:update}\>\> $AS.\textit{update}(\textit{hist})$\\
\nnll\>\> $\textit{committed}_a = \textit{committed}_a \cup \{t : \textit{decision}=(t,*)\}$\\
\nnll\>\> $\textit{collected} = \textit{collected} \setminus \textit{committed}_a$\\
\nnll\>\> $\textit{round}_a = \textit{round}_a + 1$\\
\nnll\label{alg:kwaitfree:return}\> \textbf{if} $(tx, \texttt{success}) \in \textit{hist}$ \textbf{then}\\
\nnll\label{alg:kwaitfree:response-true}\>\> \textbf{return} $\true$\\
\nnll\> \textbf{else}\\
\nnll\label{alg:kwaitfree:response-false}\>\> \textbf{return} $\false$\\
\\
Upon \textit{read}$(a)$:\\
\nnll\label{alg:kwaitfree:read-snapshot}\> \textbf{return} \textit{balance}$(a, AS.snapshot())$\\
\\
\textit{collect(a)}:\\
\nnll\> $\textit{collected} = \emptyset$\\
\nnll\> \textbf{for all} $i = \Pi$ \textbf{do}\\
\nnll\>\> \textbf{if} $R_a[i].read() \neq \bot$ \textbf{then}\\
\nnll\>\>\> $\textit{collected} = \textit{collected} \cup \{R_a[i].read()\}$\\
\nnll\> \textbf{return} $collected$\\
\\
\textit{proposal}$((a,b,q,x), \textit{snapshot})$:\\
\nnll\> \textbf{if} $\textit{balance}(a, \textit{snapshot}) \geq x$ \textbf{then}\\
\nnll\>\> $\textit{prop} =  ((a, b, q, x), \texttt{success})$\\
\nnll\> \textbf{else}\\
\nnll\>\> $\textit{prop} =  ((a, b, q, x), \texttt{failure})$\\
\nnll\> \textbf{return} $prop$\\
\\
\textit{balance(a, snapshot)}:\\
\nnll\> $\textit{incoming} = \{tx:  tx = (*, a, *,*,*) \wedge (tx, \texttt{success}) \in \textit{snapshot}\}$\\
\nnll\> $\textit{outgoing} = \{tx:  tx= (a, *, *,*,*) \wedge (tx, \texttt{success}) \in \textit{snapshot}\}$\\
\nnll\> \textbf{return} $q_0(a) + \left(\sum_{(*, a, x,*,*) \in \textit{incoming}} x\right) - \left(\sum_{(a, *, x,*,*) \in \textit{outgoing}} x\right)$
\end{tabbing}
 }
 \hrule
\caption{Wait-free implementation of a {$k$-shared \textsf{asset-transfer}} object using
  \kcons\ objects. Code for process~$p$.}
\label{fig:kaccount-to-kcons}
\end{figure}

The algorithm reducing $k$-shared \textsf{asset-transfer} to \kcons\ is given in~\cref{fig:kaccount-to-kcons}.
Before presenting a formal correctness argument, we first informally explain the intuition of the algorithm.
In our reduction, we associate a series of \kcons\
objects with every account $a$.
Up to $k$ owners of $a$ use the \kcons\
objects to agree on the order of outgoing transfers for $a$.

We maintain the state of the implemented $k$-shared \textsf{asset-transfer} object using an
atomic snapshot object $AS$.
Every process $p$ uses a distinct entry of $AS$ to store a set $\textit{hist}$. $\textit{hist}$ is a subset of all completed outgoing transfers from accounts that $p$ owns (and thus is allowed to debit).
For example, if $p$ is the owner of accounts $d$ and $e$, $p$'s $\textit{hist}$ contains outgoing transfers from $d$ and $e$.
Each element in the $\textit{hist}$ set is represented as $((a, b, x, s, r), \textit{result})$,
where $a, b$, and $x$ are the respective source account, destination account, and the amount transferred, $s$ is the originator of the transfer,
and $r$ is the \emph{round} in which the transfer  was invoked by the originator.
The value of $\textit{result} \in \{\texttt{success}, \texttt{failure}\}$ indicates whether the transfer succeeds or fails.
A transfer becomes ``visible'' when any process inserts it in its corresponding entry of $AS$.

To read the balance of account $a$, a process takes a snapshot of
$AS$, and then sums the initial balance $q_0(a)$ and amounts of all successful incoming transfers,
and subtracts the amounts of successful outgoing transfers found in $AS$.
We say that a successful transfer $tx$ is in a snapshot $AS$ (denoted by $(tx, \texttt{success}) \in AS$) if there exists an entry $e$ in $AS$ such that $(tx, \texttt{success}) \in AS[e]$.

To execute a transfer $o$ outgoing from account $a$, a process $p$ first
announces $o$ in a register $R_a$ that can be written by $p$ and read
by any other process.
This enables a ``helping'' mechanism needed to ensure wait-freedom
to the owners of $a$~\cite{Her91}.

Next, $p$ collects the transfers proposed by other owners and
tries to agree on the order of the collected transfers and their
results using a series of {\kcons} objects.
%
%
For each account, the agreement on the order of transfer-result pairs proceeds in rounds.
Each round is associated with a \kcons\ object which $p$ invokes with a proposal chosen from the set of collected transfers.
Since each process, in each round, only invokes the \kcons\ object
once, no \kcons\ object is invoked more than $k$ times and thus each
invocation returns a value (and not  $\bot$).
%

A transfer-result pair as a proposal for the next
instance of \kcons\ is chosen as follows.
Process $p$ picks the ``oldest'' collected but not yet committed
operation (based on the round number $\textit{round}_a$ attached to
the transfer operation when a process announces it; ties are broken using process IDs).
Then $p$ takes a snapshot of $AS$ and checks whether account $a$ has sufficient balance according to the state represented by the snapshot, and equips the transfer with a corresponding \texttt{success} / \texttt{failure} flag.
The resulting transfer-result pair constitutes $p$'s proposal for the next instance of \kcons.
%
\ignore{
For each round, $p$ appends to its copy of $a$ the decided-upon list of transfer-result pairs for that round.
$p$ keeps executing rounds of \kcons\ until $tx$ has been included in the decision of some round.
This can happen either by $p$'s proposal being decided in some round
or by other processes including $tx$ in their proposals and those proposals being decided (possibly over the course of several rounds).
$p$ locally keeps track of all transfers that have been agreed upon so far and excludes them from the set of collected transfers before constructing its proposal in each round.

Note that the transfer $tx$ announced by $p$ will eventually be included in some decision.
If $p$'s proposal is eventually decided, $tx$ will be decided as part of $p$'s proposal (by the algorithm, $p$ always collects its own transfer).
If $p$ executes alone, $p$'s proposal will eventually be decided in some round, namely the first round for which no other process has invoked the \kcons\ object yet.
The only case where $p$'s proposal is never decided is when at least one other process $q$ successfully invokes \kcons\ infinitely many times, with $q$'s proposal being decided.
In this case, however, $tx$ will eventually also be collected by $q$, and thus become part of $q$'s proposal.
Thus, $p$'s \textit{transfer} operation will always eventually terminate.
}
The currently executed transfer by process $p$ returns as soon as
it is decided by a \kcons\ object, the flag of the decided value (\textit{success}/\textit{failure})
indicating the transfer's response (\true/\false).

\begin{lemma}
  \label{lem:at-to-cons}
  The $k$-shared \textsf{asset-transfer} object type has a wait-free implementation in the
  read-write shared memory model equipped with $k$-consensus objects.
\end{lemma}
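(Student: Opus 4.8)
The plan is to prove two things about the implementation in \cref{fig:kaccount-to-kcons}: that it is wait-free and that it is linearizable with respect to the $k$-shared \textsf{asset-transfer} type. For wait-freedom, I would first note that atomic snapshots are wait-free and that, for each account $a$ and each round $r$, every one of the at most $k$ owners of $a$ invokes $kC_a[r]$ exactly once, so $kC_a[r]$ is invoked at most $k$ times and each \emph{propose} returns a genuine decision rather than $\bot$. The only nontrivial obligation is that the \textbf{while} loop of a \transfer\ of some transaction $tx$ on $a$ terminates. Here I would exploit the helping mechanism: once $p$ writes $tx$ into $R_a$ (line~\ref{alg:kwaitfree:register}), every other owner that re-collects eventually sees it. Ordering pending transfers by the pair (round number, process id) attached at announcement, I would observe that only finitely many transfers are \emph{older} than $tx$ (each owner contributes at most one transfer per round value, and $tx$'s round value is fixed). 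Since each round commits exactly one transfer, all transfers older than $tx$ are eventually committed; from then on, any proposal that can be decided must equal $tx$ (the oldest uncommitted transfer collected), so $tx$ is decided and $p$ returns.

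For linearizability, the crux is to place the linearization points so that the \texttt{success}/\texttt{failure} flag frozen at proposal time stays consistent with the balance at the chosen point. I would linearize every \read\ and every \emph{failed} \transfer\ at the linearization point of the snapshot on which its result was computed (line~\ref{alg:kwaitfree:read-snapshot}, resp.\ line~\ref{alg:kwaitfree:snapshot} of the winning proposer), and every \emph{successful} \transfer\ $tx$ at the linearization point of the \emph{first} \emph{update} on $AS$ inserting $(tx,\texttt{success})$ (line~\ref{alg:kwaitfree:update}). I would first check these points lie inside the respective real-time intervals: a helper's snapshot and the first update of $tx$ both occur after $tx$ is announced (hence after its invocation) and before the originator returns (the originator itself updates $AS$ before returning), so $\prec_H$ is respected. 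I would then show that, for a fixed account $a$, these points order the successful outgoing transfers of $a$ exactly as the rounds of $kC_a$ do, since any process reaching round $r+1$ has already updated its round-$r$ decision.

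Let $L$ be the resulting sequence; I would prove $L$ legal by induction on its prefixes, mirroring Theorem~\ref{th:waitfree}. The engine of the argument is the atomicity of $AS$: a snapshot reflects exactly the successful transfers whose first update is linearized before it, so for a \read\ or a failed \transfer\ the balance of $L$ at its linearization point equals the balance the algorithm computed from its snapshot; in particular a failed \transfer$(a,b,x)$ witnesses balance $<x$, as the type requires, and leaves the balances unchanged. For a successful \transfer$(a,b,x)$ decided in round $r$, the winning proposer's snapshot already reflects \emph{all} successful outgoing transfers of $a$ in rounds $<r$ (committed before the snapshot was taken) and a \emph{subset} of the successful incoming transfers; since incoming transfers only increase the balance, the balance of $L$ just before $tx$ is at least the value $\ge x$ computed by the proposer, so $tx$ is legally successful and the resulting balance is non-negative.

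The main obstacle is the \emph{failure} case. Because the flag is fixed at proposal time but the transfer is only ordered later through $kC_a$, linearizing a failed transfer at its round would be unsound: extra incoming transfers could lift the balance to $\ge x$ by the time the round is reached, contradicting the \texttt{failure} verdict. The resolution, and the delicate point the proof must get exactly right, is to pin each failed transfer (and each read) to its underlying snapshot while linearizing successful transfers at their first visible update, so that the atomicity of $AS$ forces the computed balance and the balance in $L$ to coincide for snapshot-based operations, while monotonicity of incoming transfers handles the successful ones.
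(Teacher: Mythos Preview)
Your proposal is correct and follows essentially the same route as the paper: the same linearization-point assignment (reads and failed transfers at the deciding proposer's snapshot, successful transfers at the first \emph{update} that records them), the same inductive legality argument mirroring Theorem~\ref{th:waitfree}, and the same wait-freedom reasoning via the oldest-first helping rule together with the at-most-$k$-invocation bound on each $kC_a[r]$. Your closing paragraph on why failed transfers must be pinned to the proposer's snapshot rather than to the round is a point the paper leaves implicit; otherwise the two arguments coincide.
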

\begin{proof}
  We essentially follow the footpath of the proof of Theorem~\ref{th:waitfree}.
  Fix an execution $E$ of the algorithm in
  Figure~\ref{fig:kaccount-to-kcons}.
Let $H$ be the history of $E$.

To perform a transfer $o$ on an account $a$, $p$ \emph{registers} it in $R_a[p]$
(line~\ref{alg:kwaitfree:register}) and then proceeds through a series
of {\kcons} objects, each time collecting $R_{a}$ to learn about the
transfers concurrently proposed by other owners of $a$.
Recall that each {\kcons} object is wait-free.
Suppose, by contradiction, that $o$ is registered in $R_{a}$ but is
never decided by any  instance of {\kcons}.
Eventually, however, $o$ becomes the request with the lowest round
number in $R_{a}$ and, thus, some instance of {\kcons} will be
only accessed with $o$ as a proposed value (line~\ref{alg:kwaitfree:decide}).
By validity of {\kcons}, this instance will return $o$ and, thus, $p$ will be able to complete $o$.

Let $\textit{Ops}$ be the set of all complete operations and
all {\transfer} operations $o$ such that some process
completed the update operation  (line~\ref{alg:kwaitfree:update}) in
$E$ with an argument including $o$
(the atomic snapshot and {\kcons}
operation has been linearized).
Intuitively, we include in $\textit{Ops}$ all operations that
\emph{took effect}, either by returning a response to the user or by
affecting other operations.
Recall that every such {\transfer} operation was
agreed upon in an instance of {\kcons}, let it be  $kC^{o}$.
Therefore, for every such {\transfer} operation $o$, we can identify the
process $q^{o}$ whose proposal has been decided in that instance.
We now determine a completion of $H$ and, for each $o\in\textit{Ops}$, we define a linearization point as
follows:

\begin{compactitem}

  \item If $o$ is a {\read} operation, it linearizes at the
    linearization point of the snapshot operation
    (line~\ref{alg:kwaitfree:read-snapshot}).

 \item   If $o$ is a {\transfer} operation that returns {\false},
    it linearizes at the linearization point of the snapshot operation (line~\ref{alg:kwaitfree:snapshot})
    performed by $q^{o}$ just before it invoked
   $kC^{o}.\textit{propose}()$.

 \item   If $o$ is a {\transfer} operation that some process
included in the update operation (line~\ref{alg:kwaitfree:update}), it linearizes at the
    linearization point of the \emph{first} update operation in $H$
    (line~\ref{alg:kwaitfree:update}) that includes~$o$.
 Furthermore, if $o$ is incomplete in $H$, we complete it with response $\true$.

\end{compactitem}

Let $\bar H$ be the resulting complete history and let $L$ be the sequence
of complete operations of $\bar H$ in the order of their
linearization points in $E$.
%
%
Note that, by the way we linearize operations, the linearization of a
prefix of $E$ is a prefix of $L$.
Also, by construction, the linearization point of an operation
belongs to its interval.

Now we show that $L$ is legal and, thus, $H$ is
linearizable.
We proceed by induction, starting with the empty (trivially legal)
prefix of $L$.
Let $L_{\ell}$ be the legal prefix of the first $\ell$ operation and
$op$ be the $(\ell+1)$st operation of $L$.
Let $op$ be invoked by process $p$.
The following cases are possible:

\begin{compactitem}

\item $op$ is a {\read}$(a)$: the snapshot taken at $op$'s linearization point
  contains all successful transfers concerning $a$ in $L_{\ell}$. By
  the induction hypothesis, the resulting balance is non-negative.
\item $op$ is a failed {\transfer}$(a,b,x)$: the snapshot taken at the linearization point of $op$
  contains all successful transfers concerning $a$ in $L_{\ell}$. By
  the induction hypothesis, the balance corresponding to this snapshot
   non-negative. By the algorithm, the balance is  less than $x$.

\item  $op$ is a successful  {\transfer}$(a,b,x)$.
  Let $L_{s}$, $s\leq\ell$, be the prefix of $L_{\ell}$ that only
  contains operations linearized before the moment of time when $q^{o}$
  has taken the snapshot just before accessing $kC^{o}$.

  As before accessing $kC^{o}$, $q$ went through all preceding
  {\kcons} objects associated with $a$ and put the decided values in
  $AS$,  $L_{s}$ must include \emph{all}  outgoing {\transfer}
  operations for $a$. Furthermore, $L_s$ includes a \emph{subset} of
  all incoming transfers on $a$.
  Thus, $\textit{balance}(a,L_k) \leq \textit{balance}(a,L_{\ell})$.

  By the algorithm, as $op={\transfer}(a,b,x)$ succeeds, we have
  $\textit{balance}(a,L_k)\geq x$.
  Thus, $\textit{balance}(a,L_{\ell})\geq x$ and the resulting balance
  in $L_{\ell+1}$ is non-negative.

\end{compactitem}

Thus, $H$ is linearizable.
\end{proof}


\begin{theorem}
  A $k$-shared \textsf{asset-transfer} object has consensus number $k$.
\end{theorem}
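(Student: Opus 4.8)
The plan is to combine the two lemmas just established, treating them as matching lower and upper bounds on the consensus number. Recall that the consensus number of a type is the largest number of processes that can solve consensus using only read-write registers and objects of that type; hence it suffices to pin this quantity from both sides for an arbitrary $k$-shared \textsf{asset-transfer} object.

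For the lower bound, \cref{lem:cons-to-at} already does all the work: it exhibits a wait-free algorithm solving consensus among $k$ processes using nothing but registers and a single $k$-shared \textsf{asset-transfer} object. Consequently, the consensus number of any $k$-shared \textsf{asset-transfer} object is \emph{at least} $k$.

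For the upper bound, I would invoke \cref{lem:at-to-cons}, which wait-free implements the $k$-shared \textsf{asset-transfer} type from registers and \kcons\ objects, together with the known fact that \kcons\ has consensus number exactly $k$~\cite{JT92}. The key principle is the standard composability argument underlying Herlihy's hierarchy: if a $k$-shared \textsf{asset-transfer} object could solve consensus for some $k'>k$ processes, then replacing each \textsf{asset-transfer} object in that hypothetical protocol by its implementation from \cref{lem:at-to-cons} would yield a wait-free protocol solving consensus for $k'$ processes using only registers and \kcons\ objects. Since both implementations are wait-free, the composed construction is itself wait-free, so it would witness that \kcons\ has consensus number at least $k'>k$, contradicting~\cite{JT92}. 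Therefore the consensus number of $k$-shared \textsf{asset-transfer} is \emph{at most} $k$.

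Combining the two bounds gives consensus number exactly $k$, as claimed. There is essentially no hard step remaining here, as the two lemmas carry the entire technical load; the only point deserving care is making the upper-bound composition explicit, i.e.\ justifying that a wait-free implementation of a type $A$ from a type $B$ (plus registers) forces $\mathrm{cons}(A)\le\mathrm{cons}(B)$. This is precisely the monotonicity of the wait-free hierarchy under nested wait-free simulation, and it applies cleanly because wait-freedom is preserved by the composition of \cref{lem:at-to-cons}'s implementation with the hypothetical consensus protocol.
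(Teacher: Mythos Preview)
Your proposal is correct and follows essentially the same approach as the paper: combine \cref{lem:cons-to-at} for the lower bound and \cref{lem:at-to-cons} (together with the consensus number of \kcons) for the upper bound. The paper's own proof is in fact terser than yours, simply citing the two lemmas without spelling out the composability argument you make explicit.
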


\begin{proof}
  It follows directly from \cref{lem:cons-to-at} that $k$-shared \textsf{asset-transfer} has consensus number at least $k$.
  Moreover, it follows from \cref{lem:at-to-cons} that $k$-shared \textsf{asset-transfer} has consensus number at most $k$.
  Thus, the consensus number of $k$-shared \textsf{asset-transfer} is exactly $k$.
\end{proof}

\section{Asset Transfer in Message Passing}
\label{sec:msg-passing-implications}

We established our theoretical results in a shared memory system with crash failures, proving that consensus is not necessary for implementing an asset transfer system.
Moreover, a natural generalization of such a system where up to $k$ processes have access to atomic operations on the same account has consensus number $k$.
These results help us understand the level of difficulty of certain problems in the domain of cryptocurrencies.
To achieve a practical impact, however, we need an algorithm deployable as a distributed system in a realistic setting.
Arguably, such a setting is one where processes (some of which are potentially malicious) communicate by exchanging messages.

In this section we overview an extension of our results to the message passing system with Byzantine failures. 
Instead of consensus, we rely on a \emph{secure broadcast} primitive that provides reliable delivery with weak (weaker than FIFO) ordering guarantees~\cite{MR97srm}.
Using secure broadcast, processes announce their transfers to the rest of the system.
We establish \emph{dependencies} among these transfers that induce a partial order.
Using a method similar to (a weak form of) vector clocks~\cite{fidgevc}, we make sure that each process applies the transfers respecting this dependency-induced partial order.
In a nutshell, a transfer only depends on all previous transfers outgoing from the same account, and on a subset of transfers incoming to that account.
Each transfer operation corresponds to one invocation of secure broadcast by the corresponding account's owner.
The message being broadcast carries, in addition to the transfer itself, references to the transfer's dependencies.

As secure broadcast only provides liveness if the sender is correct, faulty processes might not be able to perform any transfers.
However, due to secure broadcast's delivery properties, the correct processes will always have a consistent view of the system state.


Every transfer operation only entails a single invocation of secure broadcast and our algorithm does not send any additional messages.
\ignore{
There are different flavors of deterministic and probabilistic secure broadcast implementations presenting various trade-offs ~\cite{toueg-secure,br85acb,ma97secure,MR97srm}.
They claim various complexities ranging from quadratic in system size, through linear in the number of tolerated failures, to ``effectively constant''~\cite{ma97secure}.
}
Our algorithm inherits the complexity from the underlying secure broadcast implementation, and there is plenty of such algorithms optimizing complexity metrics for various settings ~\cite{br85acb,ma97secure,MR97srm,toueg-secure,cachin2001secure,garay2011adaptively,HT93}.
In practice, as shown by a preliminary deployment based on a naive quadratic secure broadcast implementation~\cite{br85acb} in a medium-sized system (up to 100 processes), our solution outperforms a consensus-based one by $1.5x$ to $6x$ in throughput and by up to $2x$ in latency.

The implementation can be further extended to solve the \emph{$k$-shared} asset transfer problem.
As we showed in~\cref{sec:k-consensus}, agreement among a subset of the processes is necessary in such a case.
We associate each account (owned by up to $k$ processes) with a Byzantine-fault tolerant state machine replication (BFT) service executed by the owners~\cite{pbft} of that account.
The BFT service assigns sequence numbers to transfers which the processes then submit to an extended version of the above-mentioned transfer protocol.
As long as the replicated state machine is safe and live, we guarantee that every invoked transfer operation eventually returns.
If an account becomes compromised (i.e., the safety or liveness of the BFT is violated),
only the corresponding account might lose liveness.
In other words, outgoing transfers from the compromised account may not return, while safety and liveness of transfers from ``healthy'' accounts are always guaranteed.
We describe this extension in more details later (\cref{sec:generalize-shared}).

In the rest of this section, we give details on the Byzantine message passing model, adapt our \textsf{asset-transfer} object accordingly (Sec. \ref{sec:def-byz}) and present its broadcast-based implementation (Sec. \ref{sec:at-in-msg-passing}).

\ignore{
However, we can implement asset transfer in a message passing system more efficiently,
directly using a secure broadcast abstraction that provides uniform reliable delivery with weak (weaker than FIFO) ordering guarantees.
Such an implementation still does not rely on agreement, is simpler, and remains fully asynchronous.
For the exact algorithm and its correctness proof, we refer the interested reader to the appendix,
where we also explain how our algorithm generalizes to the $k$-shared case.

We also considered a fixed set of $n$ processes and did not address a dynamic setting, involving the addition or removal of processes.
We believe that our results could also be ported to a dynamic setting in a consensusless manner,
following the path of Aguilera et al. \cite{agu10reconfig, agu11das}, showing how to emulate a shared memory in a dynamic message passing system without consensus.
We conjecture that their "dynamic emulation" can also be applied to a setting with Byzantine failures.

In order to deploy an account transfer system in a large-scale, public (permissionless) setting, two crucial problems need to be addressed: scalability and resistance to Sybil attacks.
Both these issues can be addressed through randomization.
While in this paper we focused on deterministic solutions based on a secure broadcast primitive with well known implementations,
in a companion paper, we present a highly scalable randomized secure broadcast implementation [hidden reference].
To ensure Sybil resistance, various approaches, including the infamous proof-of-work scheme, can be employed.
Yet, these do not need to be as expensive as in blockchain, where the goal of proof-of-work is both to prevent Sybil attacks and to elect a unique consensus leader:
the latter is precisely what makes the scheme so expensive.
}



%
\ignore{
We know that a consensusless implementation of \textsf{asset-transfer} must exist, simply by the argument that one can simulate shared memory in a message passing system \cite{imbs16byzshm}.
However, implementing \textsf{asset-transfer} using of an atomic snapshot object on top of simulated shared memory incurs high complexity.
\cmt{MP}{Shall we say exactly what the complexity is? And what is it?}
Instead, we present an implementation of \textsf{asset-transfer} directly on top of a \emph{secure broadcast} primitive.
}



\ignore{
We first adapt the specification of our \textsf{asset-transfer} abstraction to the Byzantine message passing model (Appendix \ref{sec:def-byz}).
Then, we present an algorithm for \textsf{asset-transfer} on top of a \emph{secure broadcast} primitive~\cite{MR97srm} in the Byzantine message passing model (Appendix \ref{sec:at-in-msg-passing}).
Finally, we discuss how it can be generalized to $k$-shared \textsf{asset-transfer} (Appendix \ref{sec:generalize-shared}).
}

\subsection{Byzantine Message Passing Model}
\label{sec:def-byz}

A process is Byzantine if it deviates from the algorithm it is assigned,
either by halting prematurely, in which case we say that the process is \emph{crashed},
or performing actions that are not prescribed by its algorithm, in which case we say that the process is \emph{malicious}.
Malicious processes can perform arbitrary actions,
except for ones that involve subverting cryptographic primitives (e.g. inverting secure hash functions).
A process is called \emph{faulty} if it is either crashed or malicious. 
A process is \emph{correct} if it is not faulty and \emph{benign} if it is not malicious.
Note that every correct process is benign, but not necessarily vice versa.

We only require that the transfer system behaves correctly towards \emph{benign} processes,
regardless of the behavior of Byzantine ones.
Informally, we want to require that no benign process can be a victim of a double-spending attack,
i.e., every execution appears to benign processes as a correct
sequential execution, respecting the original execution's real-time ordering~\cite{Her91}.

\ignore{
Moreover, we require that such execution respects the real-time order among
operations performed by benign processes~\cite{Her91}: when such an
operation completes, it should be visible to every future operation invoked by a benign process.
}

For the sake of efficiency, in our algorithm, we
slightly relax the last requirement---while still preventing
double-spending.
We require that \emph{successful} \operation{\transfer}{} operations invoked
by benign processes constitute a legal sequential history that preserves the real-time
order.
A \operation{read}{} or a failed \operation{transfer}{} operation invoked
by a benign process $p$ can be
``outdated''---it can be based on a stale state of $p$'s balance.
Informally, one can view the system requirements as \emph{linearizability}~\cite{herl90linearizability}
for successful transfers and \emph{sequential consistency}~\cite{Attiya1994} for failed
transfers and reads.
One can argue that this relaxation incurs little impact on the system's utility, since all incoming transfers are eventually applied.
As progress (liveness) guarantees, we require that every operation
invoked by a correct process eventually completes.

\begin{definition}
\label{def:banking-relaxed}
 Let $E$ be any execution of an implementation and $H$ be
the corresponding history.    
Let $\textit{ops}(H)$ denote the set of operations in $H$ that were
executed by correct processes in $E$.  
An \textsf{asset-transfer} object in message passing guarantees that
each invocation issued by a correct process is followed by a
matching response in $H$, and that there exists $\bar H$,  a completion of $H$, such that: 

\begin{compactenum}

\item[(1)] Let $\bar H^{t}$ denote the sub-history of
  successful transfers of $\bar H$ performed by correct processes and
  $\prec_{\bar H}^t$ be the subset of $\prec_{\bar H}$ restricted to
  operations in $\bar H^{t}$.
  Then there exists a legal sequential history $S$ such that (a)~for every correct process
$p$, $\bar H^{t}|p=S|p$ and (b)~$\prec_{\bar H}^t\subseteq\prec_S$.  

\item[(2)] For every correct process $p$, there exists a legal sequential history $S_p$
  such that:
  
  \begin{compactitem}

  \item $\textit{ops}(\bar H) \subseteq \textit{ops}(S_p)$, and 
    
  \item $S_p|p= \bar H |p$.
    
  \end{compactitem}
     
\end{compactenum}
\end{definition}

Notice that property~(2) implies  that every update in $H$ that
affects the account of a correct process $p$ is eventually included in
$p$'s ``local'' history and, therefore, will reflect
reads and transfer operations subsequently performed by $p$.

\subsection{Asset Transfer Implementation in Message Passing}
\label{sec:at-in-msg-passing}

Instead of consensus, we rely on a secure broadcast primitive
that is strictly weaker than consensus and has a fully asynchronous implementation.
It provides uniform reliable delivery despite Byzantine faults and so-called \emph{source order} among delivered messages.
The source order property, being even weaker than FIFO, guarantees that messages from the same source are delivered in the same order by all correct processes.
More precisely, the secure broadcast primitive we use in our
implementation has the following properties~\cite{MR97srm}:
\begin{itemize}
\item \textbf{Integrity:} A benign process delivers a message $m$ from
  a process $p$ at
  most once and,  if $p$ is benign, only if $p$ previously broadcast $m$.
\item \textbf{Agreement:} If processes $p$ and $q$ are correct and $p$ delivers
  $m$, then $q$ delivers $m$.
\item \textbf{Validity:} If a correct process $p$ broadcasts $m$, then $p$ delivers $m$.
\item \textbf{Source order:} If $p$ and $q$ are benign and both
  deliver $m$ from $r$ and $m'$ from $r$, then they do so in the same order.
\end{itemize}




\myparagraph{Operation.}
%
To perform a transfer $tx$, a process $p$ securely broadcasts a message with the transfer details: the arguments of the \operation{transfer}{} operation (see~\Cref{sec:def-sequential}) and some metadata.
The metadata includes a per-process \emph{sequence number} of $tx$ and references to the \emph{dependencies} of $tx$.
The dependencies are transfers incoming to $p$ that must be known to any process before applying $tx$.
These dependencies impose a causal relation between transfers that must be respected when transfers are being applied.
For example, suppose that process $p$ makes a transfer $tx$ to process $q$ and $q$, after observing $tx$, performs another transfer $tx'$ to process $r$.
$q$'s broadcast message will contain $tx'$, a local sequence number, and a reference to $tx$.
Any process (not only $r$) will only evaluate the validity of $tx'$ after having applied $tx$.
This approach is similar to using vector clocks for implementing causal order among events~\cite{fidgevc}.

To ensure the authenticity of operations---so that no process
is able to debit another process's account---we assume that
processes sign all their messages before broadcasting them.
In practice, similar to Bitcoin and other transfer systems,
every process possesses a public-private key pair that allows only $p$ to securely initiate transfers from its corresponding account.
For simplicity of presentation, we omit this mechanism in the algorithm pseudocode.

\begin{figure}
\hrule \vspace{1mm}
 {\small
\setcounter{linenumber}{0}
\begin{tabbing}
 bbb\=bb\=bb\=bb\=bb\=bb\=bb\=bb \=  \kill
\textbf{Local variables}: \\
$\textit{seq}[\textit{ }]$, initially $\textit{seq}[q] = 0$, $\forall q$ \emph{\{Number of validated transfers outgoing from $q$\}}\\
$\textit{rec}[\textit{ }]$, initially $\textit{rec}[q] = 0$, $\forall
q$ \emph{\{Number of delivered transfers from $q$\}}
\\
$hist[\textit{ }]$, initially $hist[q] = \emptyset$, $\forall q$ \emph{\{Set of validated transfers involving $q$\}}
\\
$\textit{deps}$, initially $\emptyset$ \emph{\{Set of last incoming transfers for account of local process p\}}
\\
$\textit{toValidate}$, initially $\emptyset$ \emph{\{Set of delivered (but not validated) transfers\}}
\\
\\
\nnll{}{} \textbf{operation} \textit{transfer}$(a,b,x)$ where $\mu(a) = \{p\}$\\
\nnll\label{line:check-balance}\> \textbf{if} $\textit{balance}(a,hist[p]\cup\textit{deps})< x$
\textbf{then}\\
\nnll\label{line:response-false}\>\> \textbf{return} \textit{false}\\
\nnll\label{line:check-sbcast}\>
$\textit{broadcast}([(a,b,x,\textit{seq}[p]+1),\textit{deps}])$\\ 
\nnll\label{line:dep-null}\> $\textit{deps} =\emptyset$\\
\\
\nnll{}{} \textbf{operation} \textit{read}$(a)$\\
\nnll\label{line:response-read}\> \textbf{return} $\textit{balance}(a,hist[a]\cup\textit{deps})$\\
\\
\nnll{}{}\label{line:deliver} \textbf{upon} $\textit{deliver}(q,m)$\\
\nnll\label{line:sb-deliver}\> let $m$ be $[(q,d,y,s),h]$\\
\nnll\label{line:nextrec}\> \textbf{if} $s=\textit{rec}[q]+1$ \textbf{then} \\
\nnll\>\> $\textit{rec}[q]=\textit{rec}[q]+1$\\
\nnll\>\> $\textit{toValidate} =  \textit{toValidate}\cup\{(q,m)\}$\\

\\
\nnll\label{line:validate}{}{} \textbf{upon} $(q,[t,h])\in \textit{toValidate} \;\wedge \;\textit{Valid}(q,t,h)$\\
\nnll\> let $t$ be $(q,d,y,s)$\\

\nnll\label{line:append-outgoing}\> $hist[q] := hist[q] \cup h \cup \{t\}$\\


\nnll\>\> $\textit{seq}[q] = s$\\
\nnll\>\> \textbf{if} $d=p$ \textbf{then} \\
\nnll\label{line:deps}\>\>\> $\textit{deps} = \textit{deps}\cup (q,d,y,s)$
\\
\nnll\>\> \textbf{if} $q=p$ \textbf{then} \\
\nnll\label{line:response-true}\>\>\> \textbf{return} \textit{true}\\
\\
\nnll\label{line:valid}{}{} \textbf{function} $\textit{Valid}(q,t,h)$\\
\nnll\> let $t$ be $(c,d,y,s)$\\
\nnll\label{line:validation}\> \textbf{return} ($q=c$)\\
\nnll\label{line:check-hist1}\>\>\>\>\textbf{and} ($s= \textit{seq}[q]+1$)\\
\nnll\label{line:check-hist2}\>\>\>\>\textbf{and} ($\textit{balance}(c,hist[q])\geq y$)\\

\nnll\label{line:check-hist3}\>\>\>\>\textbf{and} ($\forall(a,b,x,r) \in h: (a,b,x,r) \in hist[a]$)\\


\\
\nnll{}{} \textbf{function} $\textit{balance}(a,h)$\\
\nnll\label{line:balance}\> \textbf{return} sum of incoming minus outgoing transfers for account $a$ in $h$
\end{tabbing}
 }
 \hrule
\caption{Consensusless transfer system based on secure broadcast. Code for every process $p$.}

\label{fig:banking-relaxed}
\vspace{-.5cm}
\end{figure}



\Cref{fig:banking-relaxed} describes the full algorithm implementing \textsf{asset-transfer} in a Byzantine-prone message passing system.
Each process $p$ maintains, for each process $q$, an integer
$\textit{seq}[q]$ reflecting the number of transfers which
process $q$ initiated and which process $p$ has validated and applied.
Process $p$ also maintains, for every process $q$, an integer
$\textit{rec}[q]$ reflecting the number of transfers process $q$ has initiated and process $p$ has delivered (but not necessarily applied).

Additionally, there is also a list $\textit{hist}[q]$ of transfers which \emph{involve} process $q$.
We say that a transfer operation involves a process $q$ if that transfer is either outgoing or incoming on the account of $q$.
Each process $p$ maintains as well a local variable $\textit{deps}$.
This is a set of transfers \emph{incoming} for $p$ that $p$ has applied since  the last successful \emph{outgoing} transfer.
Finally, the set $\textit{toValidate}$ contains delivered transfers that are pending validation (i.e., have been delivered, but not yet validated).

To perform a $\textit{transfer}$ operation, process $p$ first checks the
balance of its own account, and if the balance is insufficient, returns $\textit{false}$ (line~\ref{line:response-false}).
Otherwise, process $p$ broadcasts a message with this operation via the secure broadcast primitive (line~\ref{line:check-sbcast}).
This message includes the three basic arguments of a \operation{transfer}{} operation as well as $\textit{seq}[p]+1$ and dependencies $\textit{deps}$.
Each correct process in the system eventually delivers this message via secure broadcast (line~\ref{line:deliver}).
Note that, given the assumption of no process executing more than one concurrent transfer, every process waits for delivery of its own message before initiating another broadcast.
This effectively turns the source order property of secure broadcast into FIFO order.
Upon delivery, process $p$ checks this message for well-formedness (lines~\ref{line:sb-deliver} and~\ref{line:nextrec}), and then adds it to the set of messages pending validation.
We explain the validation procedure later.

Once a transfer passes validation (the predicate in line~\ref{line:validate} is satisfied), process $p$ applies this transfer on the local state.
Applying a transfer means that process $p$ adds this transfer and its dependencies to the history of the outgoing (line~\ref{line:append-outgoing}) account.
If the transfer is incoming for local process $p$, it is also added to $\textit{deps}$, the set of current dependencies for $p$ (line~\ref{line:deps}).
If the transfer is outgoing for $p$, i.e., it is the currently pending
\operation{transfer}{} operation invoked by $p$, then the response $\textit{true}$ is returned (line~\ref{line:response-true}).

To perform a \textit{read}$(a)$ operation for account $a$, process $p$ simply computes the
balance of this account based on the local history $hist[a]$ (line~\ref{line:balance}).

Before applying a transfer $op$ from some process $q$, process $p$ validates $op$ via the \emph{Valid} function (lines~\ref{line:valid}--\ref{line:check-hist3}).
To be valid, $op$ must satisfy four conditions.
The first condition is that process $q$ (the issuer of transfer $op$) must be the owner of the outgoing account for $op$ (line~\ref{line:validation}).
Second, any preceding transfers that process $q$ issued must have been validated (line~\ref{line:check-hist1}).
Third, the balance of account $q$ must not drop below zero (line~\ref{line:check-hist2}).
Finally, the reported dependencies of $op$ (encoded in $h$ of
line~\ref{line:check-hist3}) must have been validated and exist in
$\textit{hist}[q]$.


\ignore{
In the following, let $x^p$ denote the value of a local variable $x$ at a correct process $p$.

\begin{lemma}
  \label{lem:prefix}
At any point in the execution,  for all correct processes $p$ and $r$
and for every process $q$, $hist[q]^p$ and $hist[q]^r$ are related by
containment. 
\end{lemma}
\begin{proof}
A correct process updates local variable $hist[q]$ each time a new
transfer from or to $q$
is delivered by secure broadcast and validated in
line~\ref{line:append-outgoing} or~\ref{line:append-incoming}.
By the source order property of secure broadcast (see~\Cref{app:secure-broadcast}), correct processes
$p$ and $r$ deliver messages from $q$ in the same order.
By the algorithm in~\Cref{fig:banking-relaxed}, a message from $q$ with a sequence number $i$ is
added to
$\textit{toValidate}$ set
only if the previous message added to $\textit{toValidate}$ had
sequence number $i-1$ (line~\ref{line:nextrec}).
Similarly, such a message is successfully validated only
if the last validated message from $q$ had sequence number $i-1$
(line~\ref{line:check-hist1}).
Thus, $hist[q]^p$ and $hist[q]^r$ are constructed from the same sequence of
messages from process $q$ and are therefore related by containment.
\end{proof}
}

\begin{lemma}
  \label{lem:liveness}
In any infinite execution of the algorithm (Figure~\ref{fig:banking-relaxed}),
every operation performed by a correct process eventually completes.
\end{lemma}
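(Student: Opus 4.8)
The plan is to dispatch the operations by type and reduce the lemma to a single nontrivial case. A {\read}$(a)$ (line~\ref{line:response-read}) and a {\transfer}$(a,b,x)$ that fails the initial balance test (line~\ref{line:check-balance}) each return without awaiting any external event, so they complete in finitely many local steps. The only operation whose completion is not immediate is a {\transfer} that passes the balance test and is broadcast at line~\ref{line:check-sbcast}: by the algorithm it returns \true{} (line~\ref{line:response-true}) exactly when its issuer $p$ reaches the validation block with $q=p$, i.e., when $p$ validates and applies its own transfer. Hence it suffices to prove that every transfer broadcast by a correct process is eventually validated and applied by its issuer.

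First I would push the message into the pending set using the secure broadcast guarantees. By Validity, a correct issuer $p$ delivers its own message, and by Agreement every correct process delivers it. Because processes are sequential and $p$ awaits completion of each of its transfers before issuing the next, $p$ stamps them with strictly consecutive sequence numbers; by Source order all correct processes deliver $p$'s messages in that order. A short induction on the sequence number then shows that the test $s=\textit{rec}[q]+1$ (line~\ref{line:nextrec}) eventually holds, so the message is inserted into \textit{toValidate} at $p$ (and at every correct process).

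The crux is to show that the \textit{Valid} predicate (lines~\ref{line:valid}--\ref{line:check-hist3}) eventually holds for $p$'s pending transfer, and I would treat its four conjuncts in turn. The authenticity check $q=c$ (line~\ref{line:validation}) holds because a correct process debits only its own account. The ordering check $s=\textit{seq}[q]+1$ (line~\ref{line:check-hist1}) follows inductively: once all of $p$'s earlier transfers are validated---the induction hypothesis---$\textit{seq}[p]$ equals the count of $p$'s applied transfers, making the pending one next; note this conjunct is also what forces validation to respect per-source order. For the balance check (line~\ref{line:check-hist2}) I would argue monotonicity: the dependencies $p$ folded into the message are incoming transfers $p$ had already applied, hence already recorded in $p$'s account history, so the balance seen at validation is at least the value $p$ tested before broadcasting; and between those two moments no further outgoing transfer on $a$ can occur, since $p$ is the sole owner of $a$ and issues no concurrent transfer, while incoming transfers only raise the balance. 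Finally, the dependency check (line~\ref{line:check-hist3}) holds at $p$ because each declared dependency was validated and applied by $p$ before the broadcast; by Agreement the same dependencies reach every correct process and, by the same ordering induction, are eventually validated there as well.

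The hard part will be establishing that this induction is well-founded, i.e., that the wait-for relation among transfers is acyclic so that no correct process blocks forever. A pending transfer can be held up at validation only by (i) $p$'s own earlier transfers and (ii) its declared dependencies, so I would define a strict order placing each transfer above the transfers it causally depends on and show it is acyclic: a dependency of $t$ is, by construction, a transfer its issuer had already validated and applied \emph{before} broadcasting $t$, hence it strictly precedes $t$ in real time and cannot itself depend on $t$; and only finitely many transfers precede any given one. I must also rule out a correct process stalling on a transfer from a faulty sender: every declared dependency is itself a transfer that some correct process validated, so it satisfies \textit{Valid}, and by Agreement and Source order it is delivered and eventually validated by all correct processes. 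Combining the well-founded induction with the three conjunct arguments then yields that $p$'s transfer is eventually validated and applied, so the operation completes.
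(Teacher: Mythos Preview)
Your proof is correct and follows the same overall route as the paper: dispatch by operation type, reduce to the case of a broadcast transfer, use Validity of secure broadcast to get the message into \textit{toValidate} at the issuer~$p$, and then verify the four conjuncts of \textit{Valid} at~$p$.

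The one place where you diverge is in treating the dependency conjunct and the ensuing ``well-foundedness'' discussion as the hard part. The paper's proof is shorter precisely because this part is trivial \emph{at the issuer}: every transfer in $h=\textit{deps}$ was already validated and applied by $p$ \emph{before} $p$ broadcast, so by the time $p$ delivers its own message each such dependency is already sitting in the appropriate $\textit{hist}[\cdot]$ at~$p$; similarly, $p$'s previous outgoing transfer has already returned (sequentiality), so $\textit{seq}[p]$ is already at the right value. Hence there is nothing to wait for and no acyclicity argument is needed---the \textit{Valid} predicate holds the moment the message enters \textit{toValidate}. Your induction over the causal dependency graph and the Agreement-based argument that other correct processes eventually validate the dependencies are not wrong, but they establish more than the lemma requires (completion at~$p$, not at every correct process) and obscure the simple reason the argument goes through.
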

\begin{proof}
A transfer operation that fails or a read operation invoked by a correct process
returns immediately (lines~\ref{line:response-false}
and~\ref{line:response-read}, respectively).

Consider a transfer operation $T$ invoked by a correct process $p$
that \emph{succeeds} (i.e., passes
the check in line~\ref{line:check-balance}), so $p$ broadcasts a message with the transfer details using secure broadcast (line~\ref{line:check-sbcast}).
%
By the validity property of secure broadcast,
$p$ eventually delivers the message (via the secure broadcast callback, line~\ref{line:deliver}) and adds it to the
$\textit{toValidate}$ set.
%
By the algorithm, this message includes a set $\textit{deps}$ of
operations (called $h$, line~\ref{line:sb-deliver}) that involve $p$'s account.
This set includes transfers that process $p$ delivered and validated after issuing
the prior successful outgoing transfer (or since system initialization if there is no such transfer)
but before issuing $T$ \hbox{(lines~\ref{line:check-sbcast} and~\ref{line:dep-null})}.

As process $p$ is correct, it operates on its own account, respects the
sequence numbers, and issues a transfer only if it has enough balance
on the account. Thus, when it is delivered by $p$, $T$ must satisfy the first three conditions of the
$\textit{Valid}$ predicate (lines~\ref{line:validation}--\ref{line:check-hist2}).
Moreover, by the algorithm, all dependencies (labeled $h$ in function $\textit{Valid}$) included in $T$  are in the history
$hist[p]$ and, thus the fourth validation condition (line~\ref{line:check-hist3})
also holds.

Thus, $p$ eventually validates $T$ and completes the operation by
returning $\true$ in line~\ref{line:response-true}.
\end{proof}

\begin{theorem}
The algorithm in Figure~\ref{fig:banking-relaxed} implements an
\textsf{asset-transfer} object type.
\end{theorem}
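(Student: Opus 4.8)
The plan is to establish the two safety conditions of \Cref{def:banking-relaxed}; the liveness requirement---that every operation of a correct process returns---is already discharged by \Cref{lem:liveness}. The backbone of the argument is a \emph{consistency} property: for any two correct processes $p$ and $r$ and any process $q$, the local histories $\textit{hist}[q]$ maintained at $p$ and at $r$ are always related by containment, and both are prefixes of one common sequence of transfers originating from (or reported as incoming by) $q$. I would derive this from the \textbf{source order} property of secure broadcast together with the sequence-number discipline of the code: all correct processes deliver the messages of $q$ in the same order (line~\ref{line:sb-deliver}), a message with sequence number $i$ is moved to $\textit{toValidate}$ only after the one numbered $i-1$ (line~\ref{line:nextrec}), and it is validated only once the previously validated message of $q$ carried number $i-1$ (line~\ref{line:check-hist1}). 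Hence every correct process applies the outgoing transfers of each fixed account in the same FIFO order.

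For condition~(1) I would define a relation $\rightarrow$ on the set of all \emph{validated} (successful) transfers---including those originating from faulty owners, since these too must appear in $S$ to keep the balances of correct accounts non-negative---as the transitive closure of (a)~the per-account FIFO order on outgoing transfers (well defined because each account has a unique owner) and (b)~the dependency relation, with $tx \rightarrow tx'$ whenever $tx$ is one of the incoming transfers listed in the $\textit{deps}$ of $tx'$. I would then argue that no $\rightarrow$-edge points backward in real time: a FIFO edge links two transfers issued sequentially by the same correct process, and a dependency edge $tx \rightarrow tx'$ means the owner of $tx'$ had already validated $tx$ before broadcasting $tx'$, so $tx$ cannot have been invoked after $tx'$ completed. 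Consequently $\rightarrow \cup \prec_{\bar H}^t$ is acyclic and admits a linear extension $S$; restricting to the transfers of any correct process $p$ yields $\bar H^t|p = S|p$, and by construction $\prec_{\bar H}^t \subseteq \prec_S$.

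The main obstacle is proving that $S$ is \emph{legal}, i.e., that no balance ever turns negative; I expect to reuse the inductive scheme of \Cref{th:waitfree}. Fix a validated outgoing transfer $o$ of amount $x$ from account $a$, accepted when $\textit{Valid}$ returned true with $\textit{balance}(a,\textit{hist}[a]) \ge x$. At that instant $\textit{hist}[a]$ contains, by the consistency property, \emph{all} outgoing transfers of $a$ that FIFO-precede $o$ and only a \emph{subset} of the incoming transfers to $a$---namely those appearing as dependencies of $o$ or of its predecessors, each of which $\rightarrow$-precedes $o$ and therefore precedes $o$ in $S$. Thus $\textit{balance}(a,\textit{hist}[a])$ lower-bounds the balance of $a$ in the prefix of $S$ just before $o$, so that prefix balance is at least $x$ and the balance right after $o$ is non-negative. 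Inducting along $S$ gives legality, completing condition~(1). The delicate point here is to track precisely which incoming transfers are counted by $\textit{balance}$ at validation time as opposed to in $S$, and to confirm that dependencies reported by faulty owners (checked in line~\ref{line:check-hist3} to be genuine validated transfers) cannot inflate a balance.

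Finally, for condition~(2) I would, for each correct process $p$, build $S_p$ from the $\rightarrow$-order refined by the order in which $p$ locally validates transfers, inserting $p$'s reads and failed transfers at the point of the $\textit{balance}$ computation they used (lines~\ref{line:check-balance} and~\ref{line:response-read}). The consistency property ensures that the histories $p$ observes are prefixes of the global ones, so the value $p$ computes for any read or failed transfer equals the balance of the corresponding legal prefix of $S_p$; this yields $S_p|p = \bar H|p$ and $\textit{ops}(\bar H) \subseteq \textit{ops}(S_p)$, and legality of $S_p$ follows from the same lower-bound argument. I would close by noting that conditions~(1) and~(2) are exactly the promised linearizability of successful transfers together with sequential consistency for reads and failed transfers.
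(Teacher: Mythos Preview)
Your proposal is correct and follows essentially the same route as the paper: both define the causal dependency order on validated transfers (the paper calls it $\preceq$), argue it is acyclic via the source-order property and the sequence-number checks in lines~\ref{line:nextrec} and~\ref{line:check-hist1}, linearize it to obtain $S$, and verify legality by comparing the balance seen at validation time with the balance in the corresponding prefix of $S$. You are somewhat more explicit than the paper on two points---folding $\prec_{\bar H}^t$ into the acyclic relation to secure the real-time clause of condition~(1), and spelling out where reads and failed transfers are inserted into $S_p$---but these are refinements of the same argument rather than a different approach.
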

\begin{proof}
Fix an execution $E$ of the algorithm, let $H$ be the
  corresponding  history.

Let $\V$ denote the set of all messages 
  that were delivered (line~\ref{line:deliver}) and
validated (line~\ref{line:validation}) at correct processes in $E$.
Every message $m=[(q,d,y,s),h]\in\V$ is put
in $hist[q]$ (line~\ref{line:append-outgoing}). 
%
We define an order $\preceq \subseteq \V\times\V$ as
follows. For $m=[(q,d,y,s),h]\in \V$ and $m'=[(r,d',y',s'),h']\in \V$,
we have $m\preceq m'$ if and only if one of the following conditions holds:

\begin{itemize}
\item $q=r$ and $s<s'$,

\item $(r,d',y',s')\in h$, or

\item there exists $m''\in \V$ such that $m\preceq m''$ and
  $m''\preceq m'$.
\end{itemize}

By the source order property of secure broadcast (see~\Cref{sec:at-in-msg-passing}), correct processes
$p$ and $r$ deliver messages from any process $q$ in the same order.
By the algorithm in~\Cref{fig:banking-relaxed}, a message from $q$ with a sequence number $i$ is
added by a correct process to
$\textit{toValidate}$ set
only if the previous message from $q$ added to $\textit{toValidate}$ had
sequence number $i-1$ (line~\ref{line:nextrec}).
Furthermore, a message  $m=[(q,d,y,s),h]$ is validated at a correct
process only if all messages in $h$ have been previously validated (line~\ref{line:check-hist3}).
Therefore, $\preceq$ is acyclic and
thus can be extended to a total order.

Let $S$ be the sequential history constructed from any such total order on messages
in $\V$ in which every message $m=[(q,d,y,s),h]$ is replaced with the
invocation-response pair
$\textit{transfer}(q,d,y);\true$.

By construction, every operation $\textit{transfer}(q,d,y)$ in $S$ is preceded by a sequence of
transfers that ensure that the balance of $q$ does not drop below $y$
(line~\ref{line:check-hist2}).
In particular, $S$ includes all outgoing transfers from the account of
$q$ performed previously by $q$ itself.
Additionally $S$ may order some \emph{incoming} transfer to $q$
that did not appear at $hist[q]$ before the corresponding $(q,d,y,s)$
has been added to it.
But these ``unaccounted'' operations may only increase the balance
of $q$ and, thus, it is indeed legal
to return $\true$.

By construction, for each correct process $p$, $S$ respects the order of
successful transfers issued by $p$.
Thus, the subsequence of successful transfers in $H$ ``looks'' linearizable to
the correct processes:  $H$, restricted to successful transfers
witnessed by the correct processes, is consistent with a legal sequential history $S$.

Let $p$ be a correct process in $E$.
Now let $\V_p$ denote the set  of all messages 
  that were delivered (line~\ref{line:deliver}) and
validated (line~\ref{line:validation}) at $p$  in $E$.
Let $\preceq_p$ be the subset of $\preceq$ restricted to the elements in
$\V_p$.
Obviously, $\preceq_p$ is cycle-free and we can again extend it to a
total order.
Let $S_p$ be the sequential history build in the same way as $S$
above.
Similarly, we can see that $S_p$ is legal and, by construction,
consistent with the local history of \emph{all} operations of $p$
(including reads and failed transfers).

By Lemma~\ref{lem:liveness}, every operation invoked by a correct
process eventually completes.
Thus, $E$ indeed satisfies the properties of an \textsf{asset-transfer} object type.
\end{proof}


\section{\texorpdfstring{$k$}{k}-shared Asset Transfer in Message Passing}
\label{sec:generalize-shared}

Our message-passing \textsf{asset-transfer} implementation can be naturally extended to the $k$-shared case,  when some
accounts are owned by up to $k$ processes.
As we showed in~\Cref{sec:k-consensus}, a purely asynchronous implementation of a $k$-shared
\textsf{asset-transfer} does not exist, even in the benign shared-memory environment.

\myparagraph{\texorpdfstring{$k$}{k}-shared BFT service.}
To circumvent this impossibility, we assume that every account is
associated with a Byzantine fault-tolerant state-machine replication
service (BFT~\cite{pbft}) that is used by the account's owners to
order their outgoing transfers.
More precisely, the transfers issued by the owners are assigned
monotonically increasing \emph{sequence numbers}.

The service can be implemented by the owners themselves, acting
both as \emph{clients}, submitting requests, and \emph{replicas},
reaching agreement on the order in which the requests must be served.
As long as more than two thirds of the owners are correct, the service
is \emph{safe}, in particular, no sequence number is assigned to more
than one transfer.
%
Moreover, under the condition that the owners can eventually communicate within
a bounded message delay, every request submitted by a correct owner is
guaranteed to be eventually assigned a sequence number~\cite{pbft}.
One can argue that it is much more likely that this assumption of \emph{eventual
  synchrony} holds for a bounded set of owners, rather than for the
whole set of system participants.
Furthermore, communication complexity of such an implementation is
polynomial in $k$ and not in $N$, the number of processes.

\myparagraph{Account order in secure broadcast.}
Consider even the case where the threshold of one third of
Byzantine owners is exceeded, where the account may become
blocked or, even worse, compromised.
In this case, different owners may be able to issue two different
transfers associated with the same sequence number.

This issue can be mitigated by a slight modification of the classical
secure broadcast algorithm~\cite{MR97srm}.
In addition to the properties of Integrity, Validity and Agreement of
secure broadcast,  the modified algorithm can implement the property of
\emph{account order}, generalizing the \emph{source order} property (\Cref{sec:at-in-msg-passing}).
Assume that each broadcast message is equipped with a sequence
number (generated by the BFT service, as we will see below).
\begin{itemize}
\item \textbf{Account order:} If a benign process $p$
  delivers messages $m$ (with sequence number $s$)  and $m'$ (with
  sequence number $s'$) such that $m$ and $m'$ are associated with the
  same account and $s<s'$, then $p$ delivers $m$ before $m'$.
\end{itemize}

Informally, the implementation works as follows. The sender sends the
message (containing the account reference and the sequence number) it
wants to broadcast to all and waits until it receives acknowledgements
from a \emph{quorum} of more than two thirds of the processes.
A message with a sequence number $s$ associated with an account $a$ is
only acknowledged by a benign process if the last message
associated with $a$ it delivered had sequence number $s-1$.
Once a quorum is collected, the sender sends the message equipped with
the signed quorum to all and delivers the message.
This way, the benign processes deliver the messages associated with
the same account in the same order.
If the owners of an account send conflicting messages for the same
sequence number, the account may block.
However, and most importantly, even a compromised account is always prevented from double spending.
Liveness of operations on a compromised account is not guaranteed, but safety and liveness of other operations remains unaffected.

\myparagraph{Putting it all together.}
The resulting $k$-shared \textsf{asset transfer} system is a
composition of a collection of BFT services (one per account), the
modified secure broadcast protocol (providing the account-order
property), and a slightly modified protocol in
Figure~\ref{fig:banking-relaxed}.

To issue a transfer operation $t$ on an account $a$ it owns, a process $p$ first
submits $t$ to the associated BFT service to get a sequence
number. Assuming that the account is not compromised and the service is
consistent, the transfer receives a unique sequence number $s$.
Note that the decided tuple $(a,t,s)$ should be signed by a quorum of
owners: this will be used by the other processes in the system to
ensure that the sequence number has been indeed agreed upon by the
owners of $a$.
The process executes the protocol in Figure~\ref{fig:banking-relaxed},
with the only modification that the sequence number \textit{seq} is
now not computed locally but adopted from the BFT service.

Intuitively, as the transfers associated with a given account are
processed by the benign processes in the same order,  the resulting protocol ensures that the history of
successful transfers is linearizable.
On the liveness side, the protocol ensures that every transfer on a
non-compromised account is guaranteed to complete.

\ignore{
We now consider a situation where, in a message passing system with $N$ processes, a subset of $k$ processes has access to a Byzantine agreement abstraction.
Without presenting the full details of the algorithm,
we give an intuition on how to implement a variant of $k$-shared \textsf{asset-transfer}
where those (and only those) $k$ processes are all owners of an account $a$.
We assume that at most $f$ of the $k$ processes might be faulty.

The basic idea is similar to the one of the algorithm's shared memory equivalent (\cref{fig:kcons-to-kaccount}).
However, the Byzantine message passing environment requires a few adaptations.
The algorithm works as follows.

The $k$ owners of $a$ first pre-process outgoing transfers for $a$, agreeing on their order and their results (success / failure), using a state machine replication (SMR) protocol.
The processes then submit the pre-processed transfers to a protocol almost identical to our basic \textsf{asset-transfer} protocol described in \cref{sec:at-in-msg-passing}.
We first describe the pre-processing step,
then the slight modification to the asset transfer protocol,
and finally we show how the processes use these two sub-protocols to implement Byzantine-tolerant $k$-shared \textsf{asset-transfer} in message passing.

\subsubsection{Pre-processing transfers using state machine replication}

Since agreement is possible among these $k$ processes, they can run a state machine replication (SMR) protocol.
For each account, its owners implement a separate instance of SMR.
It is important to note that the liveness of each SMR instance determines the liveness of outgoing operations only from its corresponding account,
while operations involving other accounts remain completely unaffected.

The SMR protocol replicates a service that serves requests of the form $(p, tx, d, \sigma)$,
where $p$ is the process initiating the request and $tx = (a, b, x)$ is a transfer of the amount $x$ outgoing from $a$ and incoming to $b$.
$d$ is a set of transfers incoming to $a$ that $tx$ depends on and $\delta$ is a set of $f + 1$ signatures of $(p, tx, d)$.
The responses produced by the replicated service are tuples $(S, \textit{TX}, P, D, R, \Sigma)$,
where $S$ is a sequence number,
\textit{TX} is a transfer,
$D$ is a set of transfer identifiers,
$R \in \{success, failure\}$ indicates whether $TX$ succeeded or failed
and $\Sigma$ is a cryptographic proof (e.g., a set of signatures).
The replicated service has the following semantics.
\begin{itemize}
\item Each request where $\sigma$ does not contain $f + 1$ valid signatures from account co-owners is considered invalid and ignored by the service.
\item $S$ is incremented by $1$ on each response.
\item \textit{TX} is a transfer that has been submitted as $tx$ in some valid request the service, but not yet output as $TX$ in a previous response.
  $P$ is the corresponding submitting process $p$.
  The service ensures fairness in the sense that every $tx$ in a request will eventually appear as $TX$ in some response, for example by employing a round-robin scheme in serving requests.
\item $D$ is the \emph{union} of dependency sets $d$ previously submitted to the service (in all preceding requests)
\item $R$ is \texttt{success} if account $a$ has sufficient balance based on the initial balance of $a$,
  the dependency set $D$, and all previously output successful transfers \textit{TX} (in all preceding responses).
  Otherwise, $R$ is \texttt{failure}.
\item In case of a successful transfer ($R = success$), the service sends its response to all owners of $a$.
  Otherwise, it is sufficient to send the response only to the process that initiated the transfer by performing the corresponding request.
\item $\Sigma$ is a cryptographic proof that the corresponding response has indeed been produced by the replicated service.
  $\Sigma$ can have, for example, the form of a set of signatures from $f + 1$ of the processes implementing the SMR service.
\end{itemize}

\subsubsection{Modification of the basic asset transfer protocol}

With respect to the basic protocol presented in \cref{sec:at-in-msg-passing},
we need to expose some of its internals in order to be able to use it in the generalized $k$-shared case.

\begin{itemize}
\item Instead of generating the sequence number of a transfer inside the \textit{transfer} operation (line \ref{line:check-sbcast} in \cref{fig:banking-relaxed}),
  we pass it as an additional argument to the \textit{transfer} operation.
  If transfers are submitted out of order, we buffer them and invoke the secure broadcast in the order of sequence numbers.
  As we show later, due to the liveness of the SMR module and the way we invoke the \textit{transfer} operation,
  a correct process always eventually invokes \textit{transfer} with all sequence numbers (never leaving ``gaps'' indefinitely).
\item Similarly, we pass also the set of dependencies attached to a broadcast transfer (line \ref{line:check-sbcast} in \cref{fig:banking-relaxed}) as an additional argument, instead of using the locally maintained $deps$ variable.
\item We still maintain the $deps$ variable (the incoming transfers observed by a process so far), even if we do not use it directly when invoking secure broadcast.
  We expose the $deps$ variable through the abstraction interface, such that the process can read it any time.
\item In the basic algorithm, we assume all transfers to be cryptographically signed by the corresponding account's owner (not expressed in the pseudocode for simplicity).
  Similarly, in the modified version, we assume a transfer to be only considered valid if it is equipped with $\Sigma$, a proof that it has been agreed upon by the account owners.
\end{itemize}

\subsubsection{Combining SMR and modified asset transfer protocol}

We now describe the actions performed by a process $p$ that wishes to execute a transfer $tx$ outgoing from an account $a$ shared with other processes.
In order to prevent malicious account owners from executing arbitrary transfers (inconsistent transfers might break liveness),
we require each transfer outgoing from $a$ to be \emph{endorsed} by $f + 1$ owners.
Thus, $p$ first sends the tuple $(p, tx, d)$, $d = deps_p$ being the dependency variable exposed by $p$'s implementation of the transfer protocol,
to $2f + 1$ account owners (possibly including $p$ itself) and waits for $f + 1$ endorsements.
A process $q$ endorses (responds with its signature) a transfer if it
(1) considers the transfer semantically valid (based to some application-specific criteria) and
(2) has observed all the attached dependencies, i.e., $d \subseteq deps_q$.
After $p$ collects a set $\sigma$ of $f + 1$ endorsing signatures, $p$ submits $(p, tx, d, \sigma)$ as an SMR request.
This procedure ensures that for each request, the following holds:
\begin{itemize}
\item At least one correct process considers $tx$ semantically valid.
\item At least one correct process has observed all dependencies in $d$, guaranteeing that these dependencies are genuine and will thus eventually be observed by all correct processes.
\end{itemize}

On reception of the response $(S, \textit{TX}, P', D, R, \Sigma)$ to any SMR request $(p', tx, d, \sigma)$,
a process $p$ performs a step based on the $R$ flag.
If $R = \texttt{failure}$ and $P' = p$, i.e. if $p$ just got notified that its transfer failed, then $p$ returns false from that transfer operation.
If $R = \texttt{success}$, then $p$ submits $(TX, D, \Sigma)$ to our modified \textsf{asset-transfer} protocol described above.
If, in addition, $p$ is the original source of the transfer, i.e., if $P' = p$, then $p$ also returns \true\ from its ongoing transfer operation.
Note that if $R = \texttt{success}$, then it is guaranteed that the operation succeeds,
as the replicated service implemented through SMR already checked that there is (and thus every correct process will eventually observe) enough incoming transfers to back the amount of $TX$.
The initial endorsement step ensures that all correct processes will eventually observe these incoming transfers.
}


\section{Related Work}
\label{sec:related-work}



Many systems address the problem of asset transfers, be they for a permissioned (private, with a trusted external access control mechanism)~\cite{hyperledger,he16corda,kar18vegvisir} or permissionless (public, prone to Sybil attacks) setting~\cite{abr16solida,decker2016bitcoin,gilad2017algorand,kok18omniledger,nakamotobitcoin,rocket}.
Decentralized systems for the public setting are open to the world.
To prevent malicious parties from overtaking the system, these systems rely on Sybil-proof techniques, e.g., proof-of-work~\cite{nakamotobitcoin}, or proof-of-stake~\cite{be16cryptocurrencies}.
The above-mentioned solutions, whether for the permissionless or the permissioned environment, seek to solve consensus.
They must inevitably rely on synchrony assumptions or randomization.
By sidestepping consensus, we can provide a deterministic and asynchronous implementation.

It is worth noting that many of those solutions allow for more than just transfers, and support richer operations on the system state---so-called smart contracts.
Our paper focuses on the original asset transfer problem, as defined by Nakamoto~\cite{nakamotobitcoin}, and we do not address smart contracts, for certain forms of which consensus is indeed necessary.
However, our approach allows for arbitrary operations, if those operations affect groups of the participants that can solve consensus among themselves.
Potential safety or liveness violations of those operations (in case this group gets compromised) are confined to the group and do not affect the rest of the system.

In the blockchain ecosystem, a lot of work has been devoted to avoid  a totally ordered chain of transfers.
The idea is to replace the totally ordered linear structure of a blockchain with that of a directed acyclic graph (DAG) for structuring the transfers in the system.
Notable systems in this spirit include Byteball~\cite{chu6byteball}, Vegvisir~\cite{kar18vegvisir}, Corda~\cite{he16corda}, Nano~\cite{lemahieu2018nano}, or the GHOST protocol~\cite{som13accelerating}.
Even if these systems use a DAG to replace the classic blockchain, they still employ consensus.

We can also use a DAG to characterize the relation between transfers, but we do not resort to solving consensus to build the DAG, nor do we use the DAG to solve consensus.
More precisely, we can regard each account as having an individual history.
Each such history is managed by the corresponding account owner without depending on a global view of the system.
Histories are loosely coupled through a causality relation established by dependencies among transfers.

The important insight that an asynchronous broadcast-style abstraction suffices for transfers appears in the literature as early as 2002, due to Pedone and Schiper~\cite{ped02handling}.
Duan et. al.~\cite{du18beat} introduce efficient Byzantine fault-tolerant protocols for storage and also build on this insight.
So does recent work by Gupta~\cite{gup16nonconsensus} on financial transfers which seems closest to ours; the proposed algorithm is based on similar principles as some implementations of secure broadcast~\cite{ma97secure,MR97srm}.
To the best of our knowledge, however, we are the first to formally define the asset transfer problem as a shared object type, study its consensus number, and propose algorithms building on top of standard abstractions that are amenable to a real deployment in cryptocurrencies.

\bibliographystyle{acm}
\bibliography{refs}

%

\end{document}